\def\A{\mathcal A}
\def\B{\mathcal B}
\def\D{\mathcal D}
\def\Q{\mathcal Q}
\def\W{\mathcal W}
\def\C{\mathbb C}
\def\Z{\mathbb Z}
\def\R{\mathbb R}
\def\N{\mathbb N}
\newtheorem{thm}{Theorem}
\newtheorem{theorem}[thm]{Theorem}
\newtheorem{corollary}[thm]{Corollary}
\newtheorem{lem}[thm]{Lemma}
\newtheorem{proposition}[thm]{Proposition}
\newtheorem{defi}[thm]{Definition}
\crefname{thm}{theorem}{theorems}
\crefname{theorem}{theorem}{theorems}
\crefname{coro}{corollary}{corollaries}
\crefname{example}{example}{examples}
\crefname{lem}{lemma}{lemmas}
\crefname{lmm}{lemma}{lemmas}
\crefname{claim}{claim}{claims}
\crefname{obs}{observation}{observations}
\crefname{proposition}{proposition}{propositions}
\crefname{prop}{proposition}{propositions}
\crefname{defi}{definition}{definitions}
\newtheorem{remark}[thm]{Remark}
\newtheorem{example}[thm]{Example}
\crefname{example}{example}{examples}
\newcommand{\norm}[1]{\left\lVert#1\right\rVert} 
\author[I.~Farkas]{Izabella Ingrid Farkas}
\email{ingrid.farkas@inf.elte.hu}
\address{Eötvös Loránd Univerity in Budapest, Hungary}
\author[E.~Pelantov\'a]{Edita Pelantov\'a}
\email{edita.pelantova@fjfi.cvut.cz}
\address{FNSPE Czech Technical University in Prague, Czech Republic}
\author[M.~Svobodov\'a]{Milena Svobodov\'a}
\email{milenasvobodova@volny.cz}
\address{FNSPE Czech Technical University in Prague, Czech Republic}
\title{From positional representation of numbers to positional representation of vectors}
\begin{document}


\begin{abstract}

To represent real $m$-dimensional vectors, a positional vector system given by a non-singular matrix $M \in \Z^{m \times m}$ and a digit set $\D \subset \Z^m$ is used.  If $m = 1$, the system coincides with the well known numeration system used to represent real numbers. We study  some properties of the vector systems which are transformable from the case $m = 1$ to higher dimensions. We focus on algorithm for parallel addition and on systems allowing an eventually periodic representation of vectors with rational coordinates.

\end{abstract}

\maketitle

\section{Introduction}

Expression  of a number as a linear combination of elements of the sequence $(\beta^j)_{j \in \Z}$ with coefficients from a finite set~$\D$ is nowadays the most used way how to represent numbers. Such a system is called a positional number system with the base $\beta$ and the digit set~$\D$. The decimal number system with the base ten and digits $0, 1, \ldots, 9$ prevails in Europe for several centuries. In the age of computers, the binary and hexadecimal number systems broke the domination of the decimal number system. But advantages of working with another number system  were  observed still before computers came on the scene: A.~Cauchy~\cite{Ca1840} checked correctness of his computation using simultaneously the classical decimal number system and decimal number system with the (symmetric and redundant) set of digits $\{-5, -4, \ldots, 0, 1, \ldots, 5\} $. V.~Gr\"unwald~\cite{Gr1885} considered a number system with base $\beta = -2$ and digits $\{ 0, 1\}$. An important moment for the number systems, making them a source of interest for many areas of mathematics, came in 1957, when A.~R\'enyi~\cite{Re57} introduced number systems with an arbitrary real base $\beta > 1$. Algebraic, dynamical, topological, geometric and  algorithmic properties of the R\'enyi number systems have been very intensively studied since then, from both theoretical and practical points of view. For example, a suitable choice of a base in an algebraic extension of rational numbers enables to represent all elements of the algebraic field by a finite or eventually periodic string of digits, see \cite{Sch80} and~\cite{VaVe18}. Further generalisations of numeration systems emerged in the following years. Knuth~\cite{Kn60} and Penney~\cite{Pen65} came with positional representations of complex numbers, wherein, instead of two strings of digits representing the real and the imaginarys part of complex numbers separately, they suggested to use a complex base~$\beta$, in order to represent the complex number by a single string of (real) digits. This type of representation was then further developed in the concept of canonical number systems~\cite{KoPe91} and (even more general) shift radix systems, see~\cite{KiTh14} for a survey on the topic.

In most of the above mentioned generalisations of numeration systems, every number has a unique representation, whose digits are determined by iterations of some transformation function. One exception is the decimal system with symmetric digit set used by Cauchy. It has eleven digits - more than necessary (for representing all positive integers). Such a system is called redundant. Already Cauchy noticed that, with this redundant system, the addition of two numbers is easier than in the classical system, as the carry propagation is limited. This property was further exploited by A.~Avizienis, aiming to speed up addition. In the classical $b$-ary numeration system, where the base is an integer $\beta = b \geq 2$, addition has linear time complexity with respect to the length of representations of the summands. Avizienis~\cite{Avizienis} designed an algorithm with constant time complexity for addition in redundant number systems using base $\beta = b \geq 3$ and a symmetric (integer) digit set.

In this paper, we consider representations of  $m$-dimensional vectors. The numeration system is given by a (square matrix) base $M \in \Z^{m \times m}$ and by a finite set of digits $\D \subset \Z^m$. An origin of such numeration systems can be found in works \cite{Vince93a} and~\cite{Vince93b} of A.~Vince, showing that for any expansive matrix $M \in \Z^{m \times m}$ there exists a digit set $\D \subset \Z^m$ such that any element~$x$ of the lattice~$\Z^m$ can be written in the form $x = \sum_{j=0}^n M^j {d}_j$, where $d_j \in \D$. In other words, the whole lattice~$\Z^m$ is representable in the matrix numeration system $(M, \D)$. On the other hand, if a matrix~$M$ has an eigenvalue inside the unit circle, then no choice of the digit set $\D \subset \Z^m$ allows to represent all integer vectors as a combination of only non-negative powers of~$M$. The matrix formalism for numeration systems (under the name numeration systems in lattices) was systematically used by A.~Kov\'acs in \cite{KovacsAttila03}. Of course, Kov\'acs, just like Vince, considers integer matrices, as they map a lattice into itself. In fact, already positional representations of Gaussian integers or algebraic integers from an algebraic extension of rational numbers  can be interpreted as special cases of the matrix numeration systems. J.~Jankauskas and J.~Thuswaldner generalised the Vince's results to matrix bases $M \in \mathbb{Q}^{m \times m}$ with rational entries and without eigenvalues in modulus strictly smaller than~$1$, see~\cite{JaTh18}. Another generalisation introduced recently allows to use both positive and negative powers of the matrix base for representation of vectors. In~\cite{PelVa22}, it is shown that for $M \in \Z^{m \times m}$ with $\det M = \Delta \neq 0$ there exists a finite digit set $\D \subset \Z^m$ such that every integer vector from~$\Z^m$ has a finite $(M, \D)$-representation, i.e.,
\begin{equation}\label{Fin}
    \Z^m \subset {\rm Fin}_{\D}(M) := \Bigl\{ \sum_{j \in I} M^j {d}_j: I \text{ finite subset of } \Z \, , {d}_j \in \D \Bigr\} \, .
\end{equation}
We show (in Theorem~\ref{veta}) that if, moreover, no eigenvalue of~$M$ lies on the unit circle, then for a suitable (finite) digit set $\D \subset \Z^m$, addition and subtraction on ${\rm Fin}_{\D}(M)$ can be performed by a parallel algorithm, i.e., in a constant number of steps independent of the length of $(M, \D)$-representation of summands. According to Proposition~\ref{Ano}, the required assumption on eigenvalues of~$M$ is in fact necessary for existence of a parallel addition algorithm on $(M, \D)$. Then we restrict our study to expansive matrices -- i.e., matrices with all eigenvalues strictly outside the unit circle. In Theorem~\ref{periodic1}, we show that the digit set $\D \subset \Z^m$ allowing parallel addition enables (for an expansive matrix base~$M$) eventually periodic $(M, \D)$-representation of every element of~$\mathbb{Q}^m$, i.e.
\begin{equation*}
    \mathbb{Q}^m = {\rm Per}_{\D}(M): = \Bigl\{ \sum_{j=-\infty}^N M^j {d}_j : N \in \N \, , \ d_j \in \D \text{ for each } j \leq N \text{ and } (d_j)_{-\infty}^N \text{ is eventually periodic } \Bigr\} \, .
\end{equation*}
Consequently, every element of~$\R^m$ has an $(M, \D)$-representation (Corollary~\ref{total}).

The methods we use in our proofs are based on proofs of analogous results for positional representation of real and complex numbers, modified accordingly to the formalism of matrices and vectors.


\section{Preliminaries}

A numeration system used for positional representation of complex numbers is given by a base $\beta \in \C$ with $|\beta| > 1$ and a finite digit set $\A \subset \C$.  If $x \in \C$ can be written in the form $x = \sum_{j=-\infty}^n a_j \beta^j$, where $a_j \in \A$ for each $j \in \Z, j \leq n$, we say that~$x$ has a $(\beta, \A)$-representation. The assumption $|\beta| > 1$ guarantees that the series $\sum_{j=-\infty}^n a_j \beta^j$ is convergent for any choice of digits $a_j \in \A$.
W.~Penney in~\cite{Pen65} introduced the following numeration system, which we use to demonstrate our approach.
\begin{example}\label{Ex:Penney}
Let us consider $\beta = i-1$. Penney in~\cite{Pen65} showed that
\begin{enumerate}
    \item each $x \in \Z[i] = \{a + \imath b: a, b \in \Z\}$ can be expressed uniquely as $x = \sum_{j=0}^n a_j \beta^j$, where $a_0, a_1, \ldots, a_n \in \{0, 1\}$ and $a_n \neq 0$ (if $x \neq 0$);
    \item each $x \in \C$ can be expressed as $x = \sum_{j=-\infty}^n a_j \beta^j$, where $a_j \in \{0, 1\}$ for every $j \in \Z, j \leq n$.
\end{enumerate}
\end{example}

Our aim is to study selected properties of matrix numeration systems used to represent $m$-dimensional vectors. Any matrix numeration system used in this paper is given by a non-singular matrix base $M \in \Z^{m \times m}$ and a finite (vector) digit set~$\D \subset \Z^m$. Thanks to the result of~\cite{PelVa22} mentioned earlier, we always assume that
\begin{eqnarray}
    \label{Zm-in-Fin}
     && \Z^m {\rm \ is\ a\ subset\ of\ {\rm Fin}_{\D}(M) \ as\ introduced\ in~\eqref{Fin},\ and,\ moreover,}\\
    \label{0-in-D}
     && \D {\rm \ contains\ the\ zero\ vector.}
\end{eqnarray}

In the first part of this paper, we work only with vectors from ${\rm Fin}_{\D}(M)$. Therefore, we do not yet impose additional assumptions on the matrix base~$M$, analogous to the assumption $|\beta| > 1$ required for (complex) number bases (which is important to ensure convergence of the infinite series $\sum_{j=-\infty}^n a_j \beta^j$). Only in the second part of the paper, we revisit the question of infinite representations convergence for matrix numeration systems as well.

\medskip

Let us list some obvious properties of the set ${\rm Fin}_{\D}(M)$:
\begin{itemize}
    \item $M^p {\rm Fin}_{\D}(M) = {\rm Fin}_{\D}(M)$ for every $p \in \Z$.
    \item ${\rm Fin}_{\D}(M) \subset \mathbb{Q}^m$, more precisely, ${\rm Fin}_{\D}(M) \subset  \bigcup\limits_{k \in \N} \frac{1}{\Delta^k} \Z^{m} $, where $\Delta = \det M$.
    \item If $x = \sum_{j=0}^n\ {M^j d_j}$ for some $n \in \N$, then $x \in \Z^m$.
    \item ${\rm Fin}_{\D}(M)$ is closed under addition and subtraction: Indeed, if $x, y \in {\rm Fin}_{\D}(M)$, then there exists $p\in \N$ such that  $M^p x \in \Z^m$ and $M^p y \in \Z^m$, and hence $M^p (x \pm y) \in \Z^m$. By assumption \eqref{Zm-in-Fin}, $M^p (x \pm y) \in {\rm Fin}_{\D}(M) = M^p {\rm Fin}_{\D}(M)$, and thus $x \pm y \in {\rm Fin}_{\D}(M)$.
\end{itemize}

\medskip

If a vector $x \in \mathbb{Q}^m$ is expressed as $x  =\sum_{j\in I} M^j d_j$ for a finite $I \subset \Z$ and $d_j \in \D$, we can, for some integer numbers $s\leq 0 \leq n$, write $x = \sum_{j=s}^{n}\ M^j d_j$, because the zero vector belongs to $\D$. Hence $x$ can be identified with a bi-infinite string $(d_j)_{j \in \Z} \in \D^\Z$ usually referred to as $(M,\D)$-representation of $x$:
\begin{equation*}
    (x)_{M, \D} =  {}^{\omega}0 d_n d_{n-1} \cdots d_1 d_0 \bullet d_{-1} d_{-2} \cdots d_{s} 0^\omega \, ,
\end{equation*}
where the zero index in the bi-infinite string is indicated by $\bullet$.\\

As mentioned in the introduction, some numeration systems used for representation of numbers can also be interpreted as matrix numeration systems. Let us illustrate this concept on the Penney numeration system introduced in Example~\ref{Ex:Penney}.

\begin{example}\label{Ex:Penney-number-vs-matrix} Let us transform the number numeration system from Example~\ref{Ex:Penney} into a matrix numeration system on  $Z^2$.  In place of the number base $\beta = \imath-1$, we use the matrix base $M = \left( \begin{array}{cc} -1 & -1 \\ +1 & -1 \end{array} \right) \in \Z^{2 \times 2}$.

It is easily seen that multiplication of a Gaussian integer -- complex number $x = b + \imath c \in \Z[\imath]$, with $b, c \in \Z$, by the (number) base $\beta$ corresponds to multiplication of an integer vector $v = (b, c)^\top \in \Z^2$ by the (matrix) base $M$, as follows:
\begin{eqnarray*}
    \beta x & = & (\imath-1) \cdot (b + \imath c) = (-b-c) + \imath (b-c) \, , \\
    M v & = & \left( \begin{array}{cc} -1 & -1 \\ +1 & -1 \end{array} \right)
        \cdot \left( \begin{array}{c} b \\ c \end{array} \right)
        = \left( \begin{array}{c} -b-c \\ b-c \end{array} \right) \, .
\end{eqnarray*}

Let us define a mapping $\xi: \Z[\imath] \mapsto \Z^2$ by the formula
\begin{equation}\label{isomorph}
\xi (b + \imath c) := (b, c)^\top \quad {\mathrm for \ any \ } b, c \in \Z \, .
\end{equation}
Obviously, $ \xi (x + y) = \xi (x) + \xi (y)$ for every   $x, y \in \Z[\imath]$. Thus the mapping $\xi$ is an isomorphism between the lattices $\Z[\imath]$ and  $\Z^2$. Moreover, it  fulfils the equality
\begin{equation}
\xi (\beta \cdot x) = M \cdot \xi (x) \quad \text{ for every $x \in \Z[\imath]$}.
\end{equation}
Hence, if $b + \imath c \in \Z[\imath]$ is written in the form $b + \imath c = \sum_{j=0}^n a_j \beta^j $ with $a_j \in \{0, 1\}$, then
$$\left( \begin{array}{c} b \\ c \end{array} \right) = \xi(b + \imath c ) = \xi\Bigl(\sum_{j=0}^n \beta^j a_j \Bigr) = \sum_{j=0}^n M^j d_j, \quad \text{where} \quad d_j =\xi(a_j) \in \left\{ \left( \begin{array}{c} 0 \\ 0 \end{array} \right),\left( \begin{array}{c} 1 \\ 0 \end{array} \right)\right\} = \left\{ \xi(0), \xi(1) \right\} \, .$$

Using the properties of the Penney numeration system from Example~\ref{Ex:Penney}, we conclude that the matrix numeration system given by the base $M = \left( \begin{array}{cc} -1 & -1 \\ +1 & -1 \end{array} \right)$ and the digit set $\D = \{ (0, 0)^\top, (1, 0)^\top \}$ provides for any vector $v \in \Z^2$ a unique $(M,\D)$-representation in the form $v = \sum_{j=0}^n M^j d_j$, $d_j \in \D$ and $d_n \neq 0$ (if $v \neq 0$).

\end{example}

\section{Parallel addition in matrix numeration systems}

Let us consider the operations of addition and subtraction on the set of $m$-dimensional vectors from algorithmic point of view. Similarly to the classical algorithms for arithmetic operations, we work only with finite representations -- i.e., on the set ${\rm Fin}_{\mathcal{D}}(M)$. Let $x, y
\in {\rm Fin}_{\mathcal{D}}(M)$, with
\begin{equation*}
    (x)_{M, \D} = {}^{\omega}0 x_n x_{n-1} \cdots x_1 x_0 \bullet x_{-1} x_{-2} \cdots x_{s} 0^\omega \quad \text{ and } \quad (y)_{M, D} = {}^{\omega}0 y_n y_{n-1} \cdots y_1 y_0 \bullet y_{-1} y_{-2} \cdots y_{s} 0^\omega .
\end{equation*}
Adding $x$ and~$y$ means to rewrite the $(M, \D + \D)$-representation
\begin{equation*}
    ^{\omega}0 (x_n + y_n) \cdots (x_1 + y_1) (x_0 + y_0) \bullet (x_{-1} + y_{-1}) \cdots (x_{s} + y_{s}) 0^\omega
 \end{equation*}
of the number $x+y$ into an $(M, \D)$-representation of~$x+y$.

As already announced, we are interested in parallel algorithms for addition. Let us mathematically formalise the parallelism. Firstly, we recall the notion of a {\em local function}, which comes from
 symbolic dynamics, see~\cite{LM}.

\begin{defi}\label{local}
Let $\A$ and $\B$ be finite sets. A function $\varphi : \A^{\Z} \rightarrow \B^{\Z}$ is said to be {\em $p$-local} if there exist non-negative integers $r$ and~$t$ satisfying $p = r + t + 1$, and a
function $\Phi : \A^p \rightarrow \B$ such that, for any $u = (u_j)_{j \in \Z} \in \A^{\Z}$ and its image $v = \varphi(u) = (v_j)_{j\in \Z} \in \B^{\Z}$, we have $v_{j} = \Phi(u_{j+t} \cdots u_{j-r})$ for every $j \in \Z$.
\end{defi}

This means that the image of~$u$ by~$\varphi$ is obtained through a window of limited length~$p$. The parameter~$r$ is called \emph{memory} and the parameter~$t$ is called \emph{anticipation} of the function~$\varphi$. Such functions, restricted to finite sequences, are computable by a parallel algorithm in constant time, irrespective of the length of the operands' representations.

\begin{defi}\label{digitsetconv}
Given a (matrix) base $M \in \Z^{m \times m}$ with $\det M \neq 0$ and (vector) digit sets $\A, \B \subset \Z^m$ containing~$0$, a {\em digit set conversion} in base~$M$ from~$\A$ to~$\B$ is a function $\varphi : \A^\Z \rightarrow \B^\Z$ such that
\begin{enumerate}
    \item for any $u = (u_j)_{j \in \Z} \in \A^\Z$ with a finite number of non-zero digits, $v = (v_j)_{j\in \Z} = \varphi(u) \in \B^{\Z}$ has only a finite number of non-zero digits, and
    \item $\sum\limits_{j \in \Z} M^j v_j = \sum\limits_{j \in \Z} M^j u_j$.
\end{enumerate}
Such a conversion is said to be {\em computable in parallel} if it is a~$p$-local function for some $p \in \N$.
\end{defi}

Thus, the operation of addition on ${\rm Fin}_{\D}(M)$ is computable in parallel if there exists a digit set conversion in base~$M$ from $\D + \D$ to $\D$ which is computable in parallel.

Two useful lemmas precede the statement about parallel addition on matrix numeration systems:

\begin{lem}\label{extendDigits}
Let $M \in \Z^{m \times m}$ be a non-singular matrix and $\D \subset \Z^m$ be a finite digit set such that every $x \in \Z^m $ is representable in the numeration system $(M, \D)$. If addition is computable in parallel in $(M, \D)$, then it is computable in parallel  also in $(M, \D')$ for each finite digit set $\D' \subset \Z^m$ containing $\D$.
\end{lem}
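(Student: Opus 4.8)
The plan is to exhibit a digit set conversion in base $M$ from $\D'+\D'$ to $\D'$ that is computable in parallel. Since $\D\subseteq\D'$, any $\D$-representation is automatically a $\D'$-representation, and the identity embedding $\D^\Z\to(\D')^\Z$ is trivially $1$-local; hence it suffices to construct a \emph{local} digit set conversion from $\D'+\D'$ to the smaller set $\D$. Throughout I use the standard fact from symbolic dynamics that a composition of a $p$-local and a $q$-local function is $(p+q-1)$-local, so that any finite composition of parallel conversions (and of pointwise, i.e.\ $1$-local, maps) is again a parallel conversion.

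First I would expand each incoming digit into $\D$. The set $\D'+\D'\subset\Z^m$ is finite, so by the hypothesis that every element of $\Z^m$ is $(M,\D)$-representable, I can fix for each $a\in\D'+\D'$ a finite representation $a=\sum_{i=-L}^{L}M^i\,e^{(a)}_i$ with $e^{(a)}_i\in\D$ and a uniform half-width $L$ (taking $L$ large enough for the whole finite set, and choosing the all-zero expansion for $a=0$). Given $u=(u_j)\in(\D'+\D')^\Z$ with finite support, replacing each $u_j$ by its expansion and collecting the contributions landing at position $\ell$ yields the string $w$ with $w_\ell=\sum_{i=-L}^{L}e^{(u_{\ell-i})}_i$. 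A short computation (substituting $\ell=j+i$) shows $\sum_\ell M^\ell w_\ell=\sum_j M^j u_j$, so $w$ is value-preserving; it has finite support because $0\mapsto0$; and $w_\ell$ is determined by the window $u_{\ell-L}\cdots u_{\ell+L}$ alone, so $u\mapsto w$ is $(2L+1)$-local. Each $w_\ell$ is a sum of $2L+1$ digits of $\D$, i.e.\ $w$ is a string over the $k$-fold sumset $k\D$ with $k:=2L+1$.

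It remains to convert $k\D$ to $\D$ in parallel. For each of the finitely many $c\in k\D$ I fix one decomposition $c=\sum_{n=1}^{k}\delta_n(c)$ with $\delta_n(c)\in\D$ (and $\delta_n(0)=0$); the pointwise maps $c\mapsto\delta_n(c)$ split a $k\D$-string into $k$ strings over $\D$ whose digitwise sum reproduces the input, and they are $1$-local. I then add these $k$ strings one at a time: adding two $\D$-strings digitwise produces a string over $\D+\D$, to which I apply the assumed parallel conversion $\D+\D\to\D$; repeating this $k-1$ times collapses the $k$ strings into a single $\D$-string with the same value. Both operations are local and there are finitely many of them, so the whole map $k\D\to\D$ is a parallel, value- and finite-support-preserving conversion.

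Composing the expansion $\D'+\D'\to k\D$ with the conversion $k\D\to\D$ gives the desired parallel digit set conversion from $\D'+\D'$ into $\D\subseteq\D'$, which is exactly parallel addition in $(M,\D')$. The only genuinely delicate point is the first step: one must check that expanding every digit into a bounded block and then re-superposing the overlapping blocks stays a \emph{bona fide} local function with bounded output alphabet — this works precisely because a single uniform bound $L$ governs all expansions and because $0\in\D$ guarantees that finite support is preserved. Everything else is bookkeeping with compositions of local functions.
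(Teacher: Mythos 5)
Your proposal is correct and follows essentially the same route as the paper: expand each enlarged digit into a finite $(M,\D)$-representation with a uniform support bound, observe that re-superposing these blocks is a $(2L+1)$-local (resp.\ $(2q+1)$-local in the paper) value-preserving conversion into a bounded iterated sumset of $\D$, and then collapse the resulting bounded number of $\D$-strings by iterating the assumed parallel addition, using $\D\subset\D'$ at the end. The only cosmetic difference is that you expand digits of $\D'+\D'$ while the paper expands digits of $\D'$ for each summand separately, which changes nothing of substance.
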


\begin{proof}
Each digit $d' \in \D'$ can be written in the form $d' = \sum_{j \in I(d')} M^j d_j$, where $I(d')$ is a finite subset of~$\Z$ and $d_j \in \D$ for each $j \in I(d')$. Let $q = \max\{|x| : d' \in \D', x \in I(d') \}$.

The string $0^\omega d'_n d'_{n-1} \cdots d'_0 \bullet d'_{-1} d'_{-2} \cdots d'_{-N} 0^\omega$  can be transformed by a~$(2q+1)$-local function into a string $0^\omega e_{n+q} e_{n+q-1} \cdots e_0 \bullet e_{-1} e_{-2} \cdots e_{-N-q} 0^\omega$, where $e_d \in \underbrace{\D + \D + \cdots + \D}_{(2q+1)-\text{times}}$.

In other words, a sum of two finite $(M, \D')$-representations can be rewritten as sum of $2(2q+1)$~finite $(M, \D)$-representations. Since addition of two strings is doable in parallel in $(M, \D)$, addition of $2(2q+1)$~strings (with fixed $q$) is possible in parallel $(M, \D)$ as well, and the resulting $(M, \D)$-representation is also an $(M, \D')$-representation, due to $\D \subset \D'$.
\end{proof}

The following lemma is stated without proof here, as the course of the proof would be identical to that of Proposition~5.1 in~\cite{FrPeSv}. Although that proposition works with roots of the minimal polynomial of an algebraic number, the minimality of the polynomial is not used for the proof itself. In fact, the idea of the proof comes from~\cite{AkDrJa12}, where expansive polynomials are considered. In our case, we extend the considerations to polynomials with no roots on the unit circle.

\begin{lem}\label{dominant}
Let $\alpha_1, \ldots, \alpha_n \in \C$ be the roots of a~polynomial $f \in \Z[X]$ satisfying $|\alpha_k| \neq 1$ for all $k = 1, \ldots, n$. Then for any $t \geq 1$ there exists a non-zero polynomial $g \in \Z[X]$, $g(X) = \sum_{l=0}^{p-1} c_l X^l$ such that $g$ is divisible by $f$ and for one coefficient $c_L$ we have
\begin{equation}
    \frac{~1~}{t} \ c_L > \sum_{l = 0, l \neq L}^{p-1} |c_l| \, .
\end{equation}
In particular, if $|\alpha_k| > 1$ for all $k = 1, \ldots, n$, then $L=0$.
\end{lem}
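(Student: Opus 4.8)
The plan is to realise $g$ as $g = f\cdot h$ for a suitable $h \in \Z[X]$, so that divisibility by $f$ holds by construction, and to choose $h$ so that the convolution defining the coefficients of $fh$ concentrates almost all of its mass in a single coefficient. I would first treat the expansive case $|\alpha_k|>1$ for all $k$, which already fixes the template and yields $L=0$. Here $f(0)=a_0\neq0$ and $1/f$ is analytic on a disc of radius strictly larger than $1$, so it has a power-series expansion $1/f(X)=\sum_{j\ge0}e_jX^j$ whose coefficients decay geometrically. The key identity is $f(X)\sum_{j=0}^{N}e_jX^j = 1 - f(X)\sum_{j>N}e_jX^j$: the left-hand side is a multiple of $f$ whose constant coefficient is $1$, whose coefficients in degrees $1,\dots,N$ vanish, and whose remaining coefficients (in degrees $N+1,\dots,N+\deg f$) are exactly the small tail contributions. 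The recurrence $a_0e_k=-\sum_{i\ge1}a_ie_{k-i}$ shows $a_0^{\,k+1}e_k\in\Z$, so multiplying by $a_0^{\,N+1}$ clears denominators without changing any ratio of coefficients; taking $N$ large makes the tail sum smaller than $1/t$ times the (dominant) constant term, giving the claim with $L=0$.

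For the general case I would replace the disc by the annulus $\rho_1<|X|<\rho_2$, where $\rho_1=\max_{|\alpha_k|<1}|\alpha_k|<1<\rho_2=\min_{|\alpha_k|>1}|\alpha_k|$; the hypothesis $|\alpha_k|\neq1$ guarantees such a gap exists and that $1/f$ is analytic there. On this annulus $1/f$ has a two-sided Laurent expansion $1/f(X)=\sum_{j\in\Z}e_jX^j$ whose coefficients decay geometrically in both directions, $|e_j|=O(\rho_2^{-j})$ as $j\to+\infty$ and $|e_j|=O(\rho_1^{|j|})$ as $j\to-\infty$. Truncating both tails and multiplying by $f$ produces a Laurent polynomial that equals $1$ at $X^0$, vanishes on an interior band, and carries only exponentially small coefficients near its two ends. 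Multiplying by $X^M$ turns this into an honest polynomial with its dominant coefficient sitting at position $L=M$, which is precisely the number of negative powers discarded; this is where the index $L$ enters, and in the expansive case there are no inside roots, $M=0$, and $L=0$.

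The main obstacle is integrality: unlike the one-sided expansion, the annulus coefficients $e_j$ are genuinely irrational (for $f(X)=X^2+X-1$ one already finds $e_0=\tfrac{1-\sqrt5}{2\sqrt5}$), so the Laurent-polynomial multiple does not lie in $\Z[X]$ and the denominator-clearing trick is unavailable. I would overcome this by scaling and rounding. Fix a truncation length so that the total edge-error mass $E$ (the sum of absolute values of the discarded tail contributions) is below $\tfrac1{2t}$; this fixes the degree, hence the number $p$ of coefficient slots. Replacing the target $X^M$ by $C\,X^M$ for a large integer $C$ and rounding each coefficient of the scaled multiplier to the nearest integer yields $h\in\Z[X]$ with $g=fh\in\Z[X]$, so $f\mid g$ automatically. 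Writing $g=fh^\ast+f(h-h^\ast)$, the rounding perturbation $f(h-h^\ast)$ changes every coefficient by at most $\tfrac12\sum_i|a_i|$, a constant independent of $C$, whereas the coefficient at position $M$ is $C$ up to this bounded noise. The remaining verification is the elementary estimate $\tfrac1t\bigl(C-\tfrac12\sum_i|a_i|\bigr) > C\,E + \tfrac p2\sum_i|a_i|$, which holds once $C$ exceeds $t\,p\sum_i|a_i|+\tfrac12\sum_i|a_i|$; this delivers $\tfrac1t c_L>\sum_{l\neq L}|c_l|$ with $c_L>0$. The delicate point to get right is that the truncation length (and hence $p$ and $E$) must be frozen before $C$ is chosen, so that the constant-size rounding noise is ultimately negligible against the scaled peak.
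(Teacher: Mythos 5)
Your argument is correct, but it takes a genuinely different route from the one the paper relies on. The paper does not prove Lemma~\ref{dominant} at all: it defers to Proposition~5.1 of \cite{FrPeSv}, whose idea goes back to \cite{AkDrJa12}, and that argument is algebraic. One considers (a suitable integer multiple of) $\prod_{k=1}^{n}\bigl(X^{K}-\alpha_k^{K}\bigr)$, which is divisible by $f$ because each factor $(X^{K}-\alpha_k^{K})/(X-\alpha_k)$ is a polynomial; its coefficients are elementary symmetric functions of the $\alpha_k^{K}$, and for $K$ large the single monomial $\prod_{|\alpha_k|>1}\alpha_k^{K}$ outweighs, by any prescribed factor $t$, the sum of the moduli of all other terms, so the dominant coefficient sits at $X^{Ks}$ with $s$ the number of roots inside the unit disc (whence $L=0$ in the expansive case). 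Integrality there is essentially free from symmetry, whereas the position $L=Ks$ is forced. Your route instead expands $1/f$ as a Laurent series on the annulus $\rho_1<|X|<\rho_2$ straddling the unit circle, truncates, and restores integrality by scaling by a large integer $C$ and rounding; the decomposition $g=fh^{*}+f(h-h^{*})$, with the truncation length (hence $p$ and the edge mass $E$) frozen \emph{before} $C$ is chosen, is exactly what makes the constant-size rounding noise negligible, and your final inequality is sound. Both proofs use the hypothesis in the same essential way (a spectral gap around the unit circle giving geometric decay), but yours is more quantitative about how $p$ depends on $t$ and on the distance of the roots to the unit circle, while the referenced one needs no rounding. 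Two small points to tighten: in your expansive warm-up the dominant coefficient $a_0^{N+1}$ may be negative, and the lemma needs $c_L>0$ (not just $|c_L|$ large), so replace $g$ by $-g$ when necessary; and the exact value $1$ at the central position requires $M\geq\deg f$ (or $M=0$ when there are no roots inside the unit disc), which should be stated when you fix the truncation window.
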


\begin{corollary}\label{inequal} Let $M \in \Z^{m \times m}$, with $\det M \neq 0$ and no eigenvalue of~$M$ equals $1$~in modulus.  Then there exists a polynomial $g\in \Z[X]$, $g(X) = \sum_{l=0}^{p-1} c_l X^l$ such that $g(M) = \Theta$ and for one coefficient $c_L$ we have
\begin{equation}\label{less}
    c_L \geq 4 \cdot \Bigl(C + \sum_{l=0, l \neq L}^{p-1} |c_l| \Bigr) \,  \quad \text{where}\ \ C= c_L - 4 \bigl\lfloor \frac{c_L}{4} \bigr\rfloor \in \{0, 1, 2, 3\} \, .
\end{equation}
\end{corollary}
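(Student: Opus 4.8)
The plan is to derive the corollary by applying Lemma~\ref{dominant} to the characteristic polynomial of $M$ and then rescaling. First I would let $f \in \Z[X]$ be the characteristic polynomial of~$M$: it is monic with integer coefficients, its roots are exactly the eigenvalues of~$M$, and by hypothesis none of them has modulus~$1$. Hence $f$ meets the assumptions of Lemma~\ref{dominant}. Moreover, by the Cayley--Hamilton theorem $f(M) = \Theta$, so every integer polynomial divisible by~$f$ annihilates~$M$ as well: if $g_0 = f h$ with $h \in \Z[X]$, then $g_0(M) = f(M)\, h(M) = \Theta$.

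Next I would invoke Lemma~\ref{dominant} with the parameter $t = 5$. This yields a non-zero $g_0 \in \Z[X]$, $g_0(X) = \sum_{l=0}^{p-1} c_l^{(0)} X^l$, divisible by~$f$, together with a distinguished index~$L$ such that $\tfrac{1}{5} c_L^{(0)} > \sum_{l \neq L} |c_l^{(0)}|$. Because the coefficients are integers this reads $c_L^{(0)} \geq 5 S_0 + 1$, where $S_0 := \sum_{l \neq L} |c_l^{(0)}|$; in particular $c_L^{(0)} > 0$, and $g_0(M) = \Theta$ by the previous step.

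The single feature of the corollary not yet matched is the additive term $4C$, where $C = c_L - 4\lfloor c_L/4 \rfloor \in \{0,1,2,3\}$ is the residue of the dominant coefficient modulo~$4$; Lemma~\ref{dominant} provides only a proportional dominance and gives no grip on such a constant. The remedy is to scale the whole polynomial by a fixed positive integer: set $g := 12\, g_0$, so that $g(M) = 12\, g_0(M) = \Theta$ and the coefficients are $c_l = 12\, c_l^{(0)}$. Writing $S := \sum_{l \neq L} |c_l| = 12 S_0$ and keeping $C \in \{0,1,2,3\}$ for $c_L \bmod 4$, I would then verify in one line
\[
c_L = 12\, c_L^{(0)} \geq 12\,(5 S_0 + 1) = 5 S + 12 \geq 4 S + 4 C ,
\]
the last step using $4C \leq 12$. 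This is precisely the asserted inequality $c_L \geq 4\bigl(C + S\bigr)$.

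The only genuinely delicate point is to balance the scaling against the choice of~$t$. Scaling by~$12$ multiplies both $c_L$ and~$S$ while leaving $C \leq 3$ bounded, which is what absorbs the additive constant $4C$; but for the dominance to survive the scaling against the $4S$ on the right-hand side, one must enter Lemma~\ref{dominant} with $t$ strictly larger than~$4$ (here $t = 5$), leaving the comfortable slack of $5S$ against $4S$. Any larger value of~$t$ or larger scaling factor would serve equally well; the constants $t = 5$ and factor~$12$ are chosen only to make the closing estimate immediate.
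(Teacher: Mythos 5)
Your proof is correct and takes essentially the same route as the paper: both apply Lemma~\ref{dominant} to the characteristic polynomial of~$M$ (with Cayley--Hamilton giving $g(M)=\Theta$ for any multiple of it) and then absorb the additive term $4C$. The paper does this by choosing $t=16$ and using $\sum_{l\neq L}|c_l|\geq 1$, whereas you choose $t=5$ and rescale by~$12$ (which in fact forces $C=0$, since $4\mid c_L$); both variants are valid.
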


\begin{proof}
Let $f$ be the characteristic polynomial of the matrix~$M$. The Hamilton--Cayley theorem says that $f(M) = \Theta$. By Lemma~\ref{dominant} applied on $f$ with $t = 16$, we find a polynomial $g(X) = \sum_{l=0}^{p-1} c_l X^l \in \Z[X]$ such that
\begin{equation*}
    c_L > 16 \cdot \sum_{l=0, l \neq L}^{p-1} |c_l| \, , \text{ for some } L \in \{0, \ldots, p-1 \} \, .
\end{equation*}
Using the fact that $\sum_{l=0, l \neq L}^{p-1} |c_l| \geq 1$ and denoting $C := c_L - 4 \bigl\lfloor \frac{c_L}{4} \bigr\rfloor \in \{0, 1, 2, 3\} \, $,
we obtain the following estimate:
$$c_L > 16 \cdot \sum_{l=0, l \neq L}^{p-1} |c_l|
= 4 \cdot \Bigl(3 \sum_{l=0, l \neq L}^{p-1} |c_l| + \sum_{l=0, l \neq L}^{p-1} |c_l| \Bigr)
\geq 4 \cdot \Bigl(3 + \sum_{l=0, l \neq L}^{p-1} |c_l| \Bigr)
\geq 4 \cdot \Bigl(C + \sum_{l=0, j \neq L}^{p-1} |c_l| \Bigr) \, .$$
Since the characteristic polynomial $f$ divides~$g$, we have $g(M) = \Theta$.
\end{proof}

\begin{example}\label{pol} The minimal polynomial of the complex number $\beta = \imath - 1$ and the characteristic polynomial of the matrix~$M$ defined in Example~\ref{Ex:Penney-number-vs-matrix} are both equal to $f(X) = X^2 + 2X + 2$. The polynomial $g(X) = X^4 +4 = (X^2+2X +2)(X^2- 2X+2)$ satisfies $g(M) = \Theta$ and \eqref{less} with $L=0$.
\end{example}

\begin{thm}\label{veta}
Let $M \in \Z^{m \times m}$, with $\det M \neq 0$ and no eigenvalue of~ $M$ equals $1$~in modulus. Then there exists a finite (vector) digit set $\D \subset \Z^{m}$ such that $\Z^{m} \subset {\rm Fin}_{\D}(M)$ and both addition and subtraction on ${\rm Fin}_{\D}(M)$ are computable in parallel.
\end{thm}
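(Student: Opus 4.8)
The plan is to reduce the theorem to the construction of a single parallel digit-set conversion from $\D+\D$ to $\D$ for a conveniently chosen symmetric digit set $\D$, exactly as the definition of parallel addition demands. The two inputs I would feed into this construction are: the annihilating polynomial $g(X)=\sum_{l=0}^{p-1}c_lX^l$ with $g(M)=\Theta$ provided by Corollary~\ref{inequal}, whose dominant coefficient $c_L$ satisfies the redundancy inequality~\eqref{less}; and the existence, from assumption~\eqref{Zm-in-Fin}, of a finite digit set $\D_0$ with $\Z^m\subset{\rm Fin}_{\D_0}(M)$. Since ${\rm Fin}_{\D}(M)$ is monotone in $\D$, any digit set containing $\D_0$ automatically represents all of $\Z^m$, so I would look for $\D$ in the form of a large symmetric box $\D=\{d\in\Z^m:\norm{d}_\infty\le a\}$, with $a$ chosen large enough both to contain $\D_0$ and to make the conversion local.

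The fundamental rewriting rule comes from $g(M)=\Theta$: for every vector $w\in\Z^m$ and every position $k$, the bi-infinite string carrying $c_l w$ at position $k-L+l$ (for $l=0,\dots,p-1$) represents $M^{k-L}g(M)w=0$, so I may subtract any such zero pattern without changing the represented value. Given an input $z=(z_j)$ with $z_j\in\D+\D$, I would let $w_k\in\Z^m$ be, coordinatewise, the nearest-integer vector to $z_k/c_L$ (with a fixed tie-breaking convention), so that the remainder satisfies $\norm{z_k-c_Lw_k}_\infty\le c_L/2$; then I subtract, simultaneously for every $k$, the zero pattern of height $w_k$ anchored so that its $c_L$-term lands on position $k$. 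The digit produced at position $j$ is
\begin{equation*}
  w^{(1)}_j=(z_j-c_Lw_j)-\sum_{l=0,\,l\neq L}^{p-1}c_l\,w_{j+L-l}.
\end{equation*}
This preserves the represented value (only a sum of zero patterns is subtracted), preserves finiteness of support, and, crucially, is local: $w^{(1)}_j$ depends only on $z_j$ and on $z_{j+L-l}$ for $l\neq L$, that is on the window $\{j-(p-1-L),\dots,j+L\}$ of length $p$, so it is a $p$-local function with memory $r=p-1-L$ and anticipation $t=L$ in the sense of Definition~\ref{local}.

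It then remains to verify the closure $w^{(1)}_j\in\D$. Writing $\Sigma=\sum_{l\neq L}|c_l|$ and bounding $\norm{w_k}_\infty\le 2a/c_L+\tfrac12$ (since $\norm{z_k}_\infty\le 2a$), I obtain $\norm{w^{(1)}_j}_\infty\le c_L/2+\Sigma(2a/c_L+\tfrac12)$. The inequality~\eqref{less} gives in particular $c_L\ge 4\Sigma$, whence $2\Sigma/c_L\le\tfrac12$, and the right-hand side is $\le a$ for every $a$ above an explicit threshold; choosing such an $a$ that is also at least the radius of $\D_0$ yields a box $\D$ that is closed under the conversion and represents $\Z^m$. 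This settles parallel addition. For subtraction I would use that $\D$ is symmetric, so the digitwise difference of two $\D$-representations again lies in $\D+(-\D)=\D+\D$ and the very same conversion applies; alternatively, Lemma~\ref{extendDigits} lets me enlarge the alphabet freely once parallel addition is secured on a base set representing $\Z^m$.

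The main obstacle I anticipate is precisely the closure estimate of the last paragraph: one must control the incoming carries $\sum_{l\neq L}c_l\,w_{j+L-l}$ against the remainder $z_j-c_Lw_j$ while keeping the carry heights $\norm{w_k}_\infty$ bounded, and it is here that the factor-$4$ domination built into~\eqref{less} (together with the term $C=c_L-4\lfloor c_L/4\rfloor$) is consumed. The delicate balance is that the box must be large enough for the carries to close up, yet the dependency window must remain finite so that the map is genuinely $p$-local; verifying both simultaneously, with the exact constants supplied by~\eqref{less}, is the technical heart of the argument.
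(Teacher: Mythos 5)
Your proposal is correct and follows essentially the same route as the paper's proof: both subtract, at each position, a multiple of the zero pattern arising from the annihilating polynomial $g(M)=\Theta$ of Corollary~\ref{inequal}, with a carry determined only by the digit at that same position, and both close up the digit set by consuming the factor-$4$ domination of $c_L$ in~\eqref{less}. The differences are only cosmetic --- you choose the carry by nearest-integer rounding of $z_k/c_L$ and work with a closed symmetric box (which makes subtraction immediate), whereas the paper uses the decomposition $\D+\D\subset\D'+4K\Q$ with carries in $\{-1,0,1\}^m$ and a half-open box, handling $-b$ as $c+d$ --- and your closure estimate indeed holds once $a\ge c_L+\sum_{l\neq L}|c_l|$.
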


\begin{proof} Let $g(X) = \sum_{l=0}^{p-1} c_l X^l$ be the polynomial from Corollary \ref{inequal}. Denote $K = \bigl\lfloor \frac{c_L}{4} \bigr\rfloor \geq 1$ and define $\D = [-3K, 3K)^m \cap \Z^m$. In order to show that the digit set~$\D$ enables parallel addition, we introduce two auxiliary sets
\begin{equation*}
    \D' = [-2K, 2K)^m \cap \Z^m \qquad \text{and} \qquad \Q = [-1,1]^m \cap \Z^m \, ,
\end{equation*}
and then exploit the obvious fact that
\begin{equation}\label{InIt}
    \D + \D \subset \D' + 4K \Q \, .
\end{equation}
Let $x = \sum_{j \in \Z} M^j a_j$ and $y = \sum_{j \in \Z} M^j b_j$, where $a_j, b_j \in \D$. Moreover, we assume that $a_j, b_j \neq 0$ for just a finite number of indices $j \in \Z$. Clearly, $a_j + b_j \in \D + \D$. Due to~\eqref{InIt}, we find for each~$j$ a vector $q_j \in \Q$ such that $a_j + b_j - 4K q_j \in \D'$. Then
\begin{equation*}
    x + y = \sum_{j \in \Z} M^j (a_j + b_j) = \sum_{j \in \Z} \Bigl(M^j (a_j + b_j) - \underbrace{M^{j-L} g(M)}_{=\Theta} q_j \Bigr) \, .
\end{equation*}
We express~$g(M)$ in the explicit polynomial form $g(M) = \sum_{l=0}^{p-1} c_l M^l$ in the rightmost sum:
\begin{equation*}
    \sum_{j \in \Z} {M^{j-L} g(M)} q_j = \sum_{j \in \Z} \sum_{l=0}^{p-1} M^{j-L+l} c_l q_j = \sum_{j \in \Z} M^j \Bigl(\sum_{l=0}^{p-1} c_l q_{j+L-l} \Bigr) \, .
\end{equation*}
Therefore, $x + y = \sum_{j \in \Z} {M}^j z_j$, with
\begin{equation}\label{sum}
    z_j = a_j + b_j - \Bigl(\sum_{l=0}^{p-1} c_l q_{j+L-l} \Bigr) \, .
\end{equation}
Using $c_L = 4K + C$, we get
\begin{equation*}
    z_j = \underbrace{a_j + b_j - 4K {q_j}}_{\in \D'} - \Bigl(\underbrace{C {q_j} + \sum_{l=0, l \neq L}^{p-1} c_l {q}_{j+L-l}}_{=: {u}}\Bigr) \, .
\end{equation*}
All entries of all vectors~$q_j$ belong to $\{0, 1, -1\}$, and thus any component of the vector~$u$ in modulus is at most $C + \sum_{l=0, l \neq L} |c_l|$. Equation \eqref{less} guarantees that the components of~$u$ are not greater than $\frac{c_L}{4}$. Since the components are integer, they are at most $K = \bigl\lfloor \frac{c_L}{4} \bigr\rfloor$. As $\D' + [-K,K]^m \subset [-3K,3K)^m$, we can conclude that ${z}_j \in \D$.

In order to compute ${z}_j$, we needed to know, besides ${a}_j$ and~${b}_j$, also ${q}_{j+L}, {q}_{j+L-1}, \ldots, {q}_{j+L-p+1}$. Let us stress that~${q}_j$ depends only on ${a}_j + {b}_j$. Hence, ${z}_j$ is determined by digits on $p$ positions, i.e., the addition is performed by a $p$-local function.

To demonstrate the point (1) of Definition \ref{digitsetconv}, we have to show that  $z_j \neq 0$ for only finitely many indices $j \in \Z$. The form of $\D$, $\D'$ and $\Q$ guarantees that there exists a unique $q_j$ satisfying $a_j + b_j - 4K q_j \in \D'$. In particular, if $a_j = b_j = 0$, then $q_j =0$. The formula \eqref{sum} implies that  $z_j$ is non-zero for only finitely many indices $j \in \Z$. Let us note that the  digit set~$\D$ is not closed under multiplication by~$-1$. But for each $b \in \D$ we can find $c, d \in \D$ such that $-b = c + d$. Hence subtraction of two vectors $x = \sum_\Z M^j a_j$ and $y = \sum_\Z M^j b_j$ can be viewed as addition of three vectors, and therefore it is computable in parallel as well.

\medskip

It remains to prove that  $\mathbb{Z}^m \subset {\rm Fin}_{\D}(M)$. But this is clear, since $\mathcal{D} \subset {\rm Fin}_{\D}(M)$, $ {\rm Fin}_{\D}(M)$ is closed under addition and each $x\in \Z^m$ can be expressed   as a finite sum of digits from $\mathcal{D}$.
\end{proof}

\begin{remark}\label{FractionalPart}
The vectors we add by the parallel algorithm as described in the previous proof are represented by both-sided infinite strings. But only finitely many entries of the strings are occupied by non-zero digits. Assume that $x + y = \sum_{j=n}^N M^j ({a_j} + {b_j})$, for some integers $n\leq N$.
As stated in the previous proof, if both digits ${a}_j$ and~${b}_j$ are zero, then the algorithm puts $q_j = 0$. The formula~\eqref{sum} for~${z_j}$ implies that ${z_j}$~is zero for all $j \leq n-L-1$ and for all $j \geq N+p-L$. Hence,
$\sum_{j=n}^N M^j ({a_j} + {b_j}) = \sum_{j=n'}^{N'} M^j z_j$, where $n'= n-L$ and $N' = N+p-L-1$.
\end{remark}

\begin{example} Consider the matrix numeration system with base matrix $M = \left( \begin{array}{cc} -1 & -1 \\ +1 & -1 \end{array} \right) \in \Z^{2 \times 2}$. By Example~\ref{pol}, the polynomial $g(X) = X^4 +4$ with $c_L = c_0 = 4$ is suitable for the parallel addition algorithm as described in the proof of Theorem~\ref{veta}. Following the proof, we put $K = \bigl\lfloor \frac{c_L}{4} \bigr\rfloor = 1$ and define $\D = [-3, 3)^2 \cap \Z^2$, i.e., the digit set has 36 elements. With such a choice of the digit set $\D$, addition in $(M, \D)$ is computable in parallel.
\end{example}

The algorithm for parallel addition constructed in the proof of Theorem \ref{veta} is very simple, as the value  $q_j$ depends only on the digits $a_j$ and $b_j$ having the same index $j$. An algorithm with such property is usually called {\it neighbour free}. However, we pay a large price for the simplicity of the algorithm -- the digit set is huge.
With another choice of the algorithm, the digit set could be substantially smaller, and still sufficient to perform addition in parallel by means of a $p$-local function (with a larger parameter $p$, though).

\begin{example}\label{Ex:Penney-matrix_5-digits} Consider the numeration system in $\C$ with base $\beta = \imath-1$. In \cite{LegSvo}, a $7$-local function of parallel addition in system $(\beta, \A)$ is found for the digit set $\A = \{0, \pm 1, \pm \imath\}$.  Let us denote the $7$-local function as $\varphi : (\A+\A)^7 \mapsto \A$, acting on a $7$-tuple $(w_j, \ldots, w_{j-6}) \in (\A+\A)^7$ by means of an auxiliary quotient function $Q : (\A+\A)^6 \mapsto \Q \subset \Z[\imath]$ as follows:
\begin{eqnarray}
    \label{ParAd-b1}
    q_j & := & Q(w_j, \ldots, w_{j-5}) \in \Q \quad {\mathrm \ and, \ consequently,} \\
    \label{ParAd-b2}
    z_j & := & w_j + q_{j-1} - \beta q_j = \varphi (w_j, \ldots, w_{j-6}) \in \A \, .
\end{eqnarray}

The local functions $\varphi$ and $Q$ acting on numbers can be transformed to local functions $\varphi': (\D+\D)^7 \mapsto \D = \xi(\A)$ and $Q': (\D+\D)^6 \mapsto \Q' = \xi(\Q)$ acting on vectors, by means of the isomorphism $\xi : \Z[\imath] \mapsto \Z^2$ defined in Example~\ref{Ex:Penney-number-vs-matrix}. Thereby, we obtain the following functions:
\begin{eqnarray}
    \label{ParAd-M1}
    q'_j & := & Q'(w'_j, \ldots, w'_{j-5}) := \xi (Q(\xi^{-1}(w'_j), \ldots, \xi^{-1}(w'_{j-5}))) \in Q' \\
    \label{ParAd-M2}
    z'_j & := & w'_j + q'_{j-1} - M q'_j = \varphi'(w'_j, \ldots, w'_{j-6}) =: \xi (\varphi(\xi^{-1}(w'_j), \ldots, \xi^{-1}(w'_{j-6}))) \in \D \, .
\end{eqnarray}

It means that, with help of the formulas \eqref{ParAd-b1} and \eqref{ParAd-b2} from $7$-local parallel addition on the number system $(\beta, \A)$, we obtain $7$-local parallel addition on the matrix system $(M, \D)$ by the formulas \eqref{ParAd-M1} and \eqref{ParAd-M2}, with digit set size $\#\D = \#(\xi(\A)) = \#\A = 5$. The vector digit set of size~$5$ has elements $\{(0, 0)^\top, (1, 0)^\top, (-1, 0)^\top, (0, 1)^\top, (0, -1)^\top \} = \D$.

As proved in~\cite{Leg}, the size of $5$ is minimal for a digit set allowing  parallel addition on the number system  with  base $\beta = \imath-1$.  Consequently, the digit set size $\#\D = 5$ must be minimal for parallel addition on the matrix system $(M, \D)$ as well, due to the isomorphism~$\xi$.
\end{example}

\medskip

The algorithm for parallel addition of vectors in $\Z^2$ presented in the previous Example~\ref{Ex:Penney-matrix_5-digits} uses, for the given matrix base $M$, a digit set of the minimal possible size for parallel addition. However, the way to determine the coefficients~$q_j$ is very laborious, as the formula~\eqref{ParAd-M1} is in fact a look up table with $13^6$~rows. With digit set size increased from~$5$ to~$9$ elements, a lot simpler algorithm for parallel addition  can be obtained, as presented in the following Example~\ref{Ex:Penney-matrix_9-digits}.

\begin{example}\label{Ex:Penney-matrix_9-digits}  Let us consider  $M = \left( \begin{array}{cc} -1 & -1 \\ +1 & -1 \end{array} \right)$ and the digit set
\begin{equation}
    \tilde{\D} = \{ (b, c)^\top \, : \,  b, c \in \{0, \pm 1\} \} \subset \Z^2 \quad {\mathrm of \ size \ } \#\tilde{\D} = 9 \, .
\end{equation}

Again, we construct an auxiliary coefficient function $\tilde{Q} : (\tilde{\D} + \tilde{\D})^2 \mapsto \tilde{\Q} \subset \Z^2$. The coefficients $\tilde{q}_j \in \tilde{\Q}$ produced by $\tilde{Q}$ then provide the result sum digits $\tilde{z}_j \in \tilde{\D}$ via local function $\tilde{\varphi} : (\tilde{\D} + \tilde{\D})^3 \mapsto \tilde{\D}$, as follows:
\begin{eqnarray}
    \label{ParAd-M1_9-digits}
    \tilde{q}_j & := & \tilde{Q}(\tilde{w}_j, \tilde{w}_{j-2}) \in \tilde{\Q} \quad {\mathrm \ and, \ consequently,} \\
    \label{ParAd-M2_9-digits}
    \tilde{z}_j & := & \tilde{w}_j + \tilde{q}_{j-2} - M^2 \tilde{q_j} = \tilde{\varphi} (\tilde{w}_j, \tilde{w}_{j-2}, \tilde{w}_{j-4}) \in \tilde{\D} \, .
\end{eqnarray}

The coefficient set $\tilde{\Q}$ is, just by coincidence, equal to the digit set $\tilde{\D}$:
\begin{equation*}
    \tilde{\Q} = \{ (0, 0)^\top, \pm(1, 0)^\top, \pm(0, 1)^\top, \pm(1, 1)^\top, \pm(1, -1)^\top \} \, .
\end{equation*}

The interim sum digit set $\tilde{\W} = (\tilde{\D} + \tilde{\D})$ has $25$ elements, with $(\pi/2)$-rotation symmetry given by rotation matrix $R_{\pi/2} = \left( \begin{array}{cc} 0 & -1 \\ +1 & 0 \end{array} \right)$:
\begin{eqnarray*}
    \tilde{\W} & = & (\tilde{\D} + \tilde{\D}) = \{ (b, c)^\top \, | \,  b, c \in \{0, \pm 1, \pm 2\} \}
                = \tilde{\W}_0 \cup \tilde{\W}_1 \cup \tilde{\W}_2 \cup \tilde{\W}_3 \, , \quad {\mathrm where} \\
    \tilde{\W}_k & = & (R_{\pi/2})^k \cdot \tilde{\W}_0 = (R_{\pi/2})^k \cdot \{ (0, 0)^\top, (1, 0)^\top, (2, 0)^\top, (1, 1)^\top, (2, 1)^\top, (1, 2)^\top, (2, 2)^\top \} \, .
\end{eqnarray*}

Thanks to the $(\pi/2)$-rotation symmetry of all the sets in question, i.e., $\tilde{\D}$, $\tilde{\Q}$, and $\tilde{W} = (\tilde{\D} + \tilde{\D})$, it is enough to specify the coefficient function by listing its values $\tilde{Q}(\tilde{w}_j, \tilde{w}_{j-2})$ just for $\tilde{w}_j$ from the first quadrant $\tilde{\W}_0 \ni \tilde{w}_j$. All the rest can then be obtained by rotation:
\begin{equation*}
    \tilde{Q}((R_{\pi/2})^k \cdot (\tilde{w}_j, \tilde{w}_{j-2})) =  (R_{\pi/2})^k \cdot \tilde{Q}(\tilde{w}_j, \tilde{w}_{j-2}) \, .
\end{equation*}

For all $\left( \tilde{w}_j, \tilde{w}_{j-2} \right) \in \tilde{\W}^2$ with $\tilde{w}_j \in  \tilde{\W}_0$, the coefficients $\tilde{q}_j$ assigned by $\tilde{Q}$ are listed below:
\begin{itemize}
    \item $\tilde{Q} \left( (0, 0)^\top , \tilde{w}_{j-2} \right) := (0, 0)^\top$ for any $\tilde{w}_{j-2} \in \tilde{\W}$;
    \item $\tilde{Q} \left( (2, 0)^\top , \tilde{w}_{j-2} \right) := (0, 1)^\top$ for any $\tilde{w}_{j-2} \in \tilde{\W}$;
    \item $\tilde{Q} \left( (2, 2)^\top , \tilde{w}_{j-2} \right) := (-1, 1)^\top$ for any $\tilde{w}_{j-2} \in \tilde{\W}$;
    \item $\tilde{Q} \left( (1, 0)^\top , \tilde{w}_{j-2} \right) := (0, 0)^\top$ for $\tilde{w}_{j-2} = (b, c)^\top$ with $c \geq 0$,
        \item $\tilde{Q} \left( (1, 0)^\top , \tilde{w}_{j-2} \right) := (0, 1)^\top$ for $\tilde{w}_{j-2} = (b, c)^\top$ with $c < 0$;
    \item $\tilde{Q} \left( (1, 2)^\top , \tilde{w}_{j-2} \right) := (-1, 0)^\top$ for $\tilde{w}_{j-2} = (b, c)^\top$ with $c \geq 0$,
        \item $\tilde{Q} \left( (1, 2)^\top , \tilde{w}_{j-2} \right) := (-1, 1)^\top$ for $\tilde{w}_{j-2} = (b, c)^\top$ with $c < 0$;
    \item $\tilde{Q} \left( (2, 1)^\top , \tilde{w}_{j-2} \right) := (0, 1)^\top$ for $\tilde{w}_{j-2} = (b, c)^\top$ with $b \leq 0$,
        \item $\tilde{Q} \left( (2, 1)^\top , \tilde{w}_{j-2} \right) := (-1, 1)^\top$ for $\tilde{w}_{j-2} = (b, c)^\top$ with $b > 0$;
    \item $\tilde{Q} \left( (1, 1)^\top , \tilde{w}_{j-2} \right) := (0, 0)^\top$ for $\tilde{w}_{j-2} = (b, c)^\top$ with $b \leq 0$ and $c \geq 0$,
        \item $\tilde{Q} \left( (1, 1)^\top , \tilde{w}_{j-2} \right) := (-1, 1)^\top$ for $\tilde{w}_{j-2} = (b, c)^\top$ with $b \geq 0$ and $c \leq 0$, except for $b = c = 0$,
        \item $\tilde{Q} \left( (1, 1)^\top , \tilde{w}_{j-2} \right) := (-1, 0)^\top$ for $\tilde{w}_{j-2} = (b, c)^\top$ with $b > 0$ and $c > 0$,
        \item $\tilde{Q} \left( (1, 1)^\top , \tilde{w}_{j-2} \right) := (0, 1)^\top$ for $\tilde{w}_{j-2} = (b, c)^\top$ with $b < 0$ and $c < 0$.
\end{itemize}

By checking all the possible variants $\left( \tilde{w}_j, \tilde{w}_{j-2}, \tilde{w}_{j-4} \right) \in \tilde{\W}^3$, it can be verified that the final digit $\tilde{z}_j$ calculated by formulas \eqref{ParAd-M1_9-digits} and \eqref{ParAd-M2_9-digits} is always an element of the desired digit set $\tilde{\D}$.

Correct value of the final sum $\tilde{z} = \sum_{j \in \Z} M^j \tilde{z}_j$ is guaranteed by
\begin{eqnarray*}
    \tilde{z} & = & \sum_{j \in \Z} M^j \tilde{z}_j = \sum_{j \in \Z} M^j (\tilde{w}_j + \tilde{q}_{j-2} - M^2 \tilde{q}_j) = \sum_{j \in \Z} M^j \tilde{w}_j + \sum_{j \in \Z} M^j \tilde{q}_{j-2} - \sum_{j \in \Z} M^{j+2} \tilde{q}_j = \\
    & = & \tilde{w} + \sum_{j \in \Z} M^j \tilde{q}_{j-2} - \sum_{l \in \Z} M^l \tilde{q}_{l-2} = \tilde{w} + 0 = \tilde{w} \, .
\end{eqnarray*}
\end{example}

\begin{remark}\label{ingrid}
Consider $M \in \Z^{m\times m}$ with no eigenvalue on the unit circle. Then at least one eigenvalue~$\lambda$ of~$M$ satisfies $|\lambda| > 1$. The eigenvalue ~$\lambda$ is an algebraic integer, as it is a root of the characteristic polynomial $f \in \Z[X]$ of the matrix~$M$, and, obviously, $f$~is monic. If, moreover, $f$~is irreducible over~$\mathbb{Q}$, then there exists an isomorphism $\xi$ between $\Z^m$ and~$\Z[\lambda]$:
\begin{eqnarray}
    \label{Z-lambda}
    \Z[\lambda] & = & \{a_0 + a_1 \lambda + \cdots + a_{m-1} \lambda^{m-1} : a_k \in \Z \ \text{for all }k = 0, 1, \ldots, m-1\}; \\
    \label{isomorphism-Z-lambda}
    \xi: \Z[\lambda] \mapsto \Z^m & \quad & \xi(a_0 + a_1 \lambda + \cdots + a_{m-1}) := (a_0, a_1, \cdots, a_{m-1})^\top \, .
\end{eqnarray}

Consequently, any algorithm for parallel addition in the number system $(\lambda, \A)$  with $\A \subset \Z[\lambda]$ can be transformed by the isomorphism~$\xi$ to the matrix numeration system $(M, \xi(\A))$. In particular, if $\A \subset \Z$, then the digit set $\xi(\A) \subset \{c e_1: c \in \Z\}$, where $e_1 = (1, 0, \ldots, 0)^\top$. Any known result on the minimal size of the digit set allowing parallel addition in the number system with base~$\lambda$ can be applied to the matrix numeration system with base~$M$, thanks to irreducibility of~$f$ over~$\mathbb{Q}$. Some results on the minimal size of digit sets for parallel addition in number systems (with complex base $\beta \in \C$) can be found e.g. in~\cite{FrPeSv13}.
\end{remark}

Theorem \ref{veta} states that each non-singular matrix $M \in \Z^{m \times m}$ with no eigenvalue on the unit circle can be equipped with a suitable finite digit set $\D \subset \Z^m$ such that any integer vector $x \in \Z^m$ is representable in the system $(M, \D)$. As shown in~\cite{PelVa22}, the assumption $|\lambda| \neq 1$ for each eigenvalue~$\lambda$ of~$M$ is not necessary for representability of $\Z^m$ in $(M, \D)$. Nevertheless, this assumption is necessary for parallel addition in $(M, \D)$, as shown below.

\begin{proposition}\label{Ano}
If addition in the matrix numeration system $(M, \D)$ with base $M \in \Z^{m\times m}$ and (finite) digit set $\D \subset \Z^m$ is doable in parallel, then no eigenvalue of~$M$ lies on the unit circle.
\end{proposition}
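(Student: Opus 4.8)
The plan is to argue by contraposition: assuming an eigenvalue $\lambda$ of $M$ with $|\lambda|=1$, I would show that addition in $(M,\D)$ cannot be $p$-local. First I would collapse the problem to one complex dimension. Pick a left eigenvector $w\in\mathbb{Q}(\lambda)^m$ with $w^\top M=\lambda w^\top$ (it exists over $\mathbb{Q}(\lambda)$ because $M-\lambda I$ is singular), so that $w^\top M^j=\lambda^j w^\top$ and hence $w^\top\bigl(\sum_j M^j d_j\bigr)=\sum_j\lambda^j a_j$ with scalar digits $a_j:=w^\top d_j$ lying in the finite set $A:=w^\top\D\subset\mathbb{Q}(\lambda)\subset\C$; note $0\in A$, and $A\neq\{0\}$ since otherwise $w^\top$ would annihilate ${\rm Fin}_{\D}(M)\supseteq\Z^m$ and force $w=0$. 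Under this projection the system becomes a positional number system with a base $\lambda$ on the unit circle and digit alphabet $A$. The property I would carry across the projection is not functional locality (the scalar output digit is \emph{not} a function of the scalar input digits, as $w^\top$ forgets information) but the bounded support growth recorded in Remark~\ref{FractionalPart}: parallel addition rewrites a sum of two $\D$-representations as a $\D$-representation whose support exceeds the combined input support by at most a fixed constant $c$. Projecting, this yields the scalar statement that every value $\sum_{j=s}^{n}\lambda^j(a_j+a'_j)$ with digits in $A$ can be rewritten as $\sum_{j=s-c}^{n+c}\lambda^j b_j$ with $b_j\in A$.

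It then remains to prove the scalar impossibility: a base $\beta$ with $|\beta|=1$ admits no addition with bounded support growth. This is the number-base analogue of the statement, and I would adapt the argument of~\cite{FrPeSv} (the minimality of the polynomial there is not needed, exactly as the excerpt already observes for Lemma~\ref{dominant}); see also~\cite{FrPeSv13}. The intuitive content is that $|\lambda|=1$ leaves no direction in which a carry can decay, so a single interim digit lying outside $\D$ cannot be absorbed inside a window of bounded length, contradicting bounded support growth. To make this precise I would split into two subcases. If $\lambda$ is a root of unity of period $P$, I would feed in long constant interim blocks and compare value preservation across the residue classes of the block length modulo $P$: a single fixed conversion produces a fixed bulk digit and fixed boundary patterns, and the resulting finite system of identities (one per residue) is overdetermined and inconsistent. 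If $\lambda$ is not a root of unity, naive size estimates fail, since the dense orbit $(\lambda^j)$ cannot stay aligned with a finite alphabet and projected values never grow faster than linearly in the support length; instead I would exploit the arithmetic of $\mathbb{Q}(\lambda)$, passing to a Galois conjugate of $\lambda$ lying off the unit circle and using the induced contraction to confine the conjugate values, thereby contradicting the exponentially many distinct values forced by bounded support growth.

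The main obstacle is precisely this scalar impossibility in the non-root-of-unity case. The difficulty is structural rather than computational: value preservation can always be arranged on any single input, so the contradiction must come from the requirement that one global rule (equivalently, one uniform support-growth bound) serve all inputs at once, and from the algebraic rather than metric properties of $\lambda$ when $|\lambda|=1$. A secondary point I would be careful about is that the characteristic polynomial of $M$ need not be irreducible, which is why I would use the eigenvector projection above rather than the lattice isomorphism $\Z^m\cong\Z[\lambda]$ of Remark~\ref{ingrid}; the projection is merely a homomorphism, but it suffices to transport bounded support growth, which is all the argument consumes.
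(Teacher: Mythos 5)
Your setup is the same as the paper's: project via a left eigenvector $v^\top$ with $v^\top M=\lambda v^\top$, normalised (after using Lemma~\ref{extendDigits} to assume $\D$ spans $\R^m$) so that $v^\top d=1$ for some $d\in\D$, and transport to the scalar system only the bounded-support-growth consequence of $p$-locality. But the scalar impossibility you defer to, and explicitly flag as the main obstacle, is precisely the content of the paper's proof, and it is settled by an elementary extremal argument that needs no case split on whether $\lambda$ is a root of unity and no Galois conjugation. The two ingredients you are missing are these. First, since $|\lambda|=1$ one has $\Re(\lambda^j)>\tfrac12$ for infinitely many $j$, so already with the digits $0$ and $d$ the quantity $\Re\bigl(\sum_{j=0}^{N-1}\varepsilon_j\lambda^j\bigr)$, $\varepsilon_j\in\{0,1\}$, exceeds $2S$ for suitable $N$, where $S$ is the maximal modulus of a projected window sum $\sum_{j=0}^{p-1}\lambda^j v^\top d_j$ over length-$p$ digit blocks; your remark that projected values grow at most linearly in the support length is true but irrelevant, because unbounded growth of any kind already beats the fixed constant $2S$. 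Second, the extremal trick: choose $x=\sum_{j=0}^{N-1}M^jx_j$ \emph{maximising} $|\Re(v^\top x)|$ over all length-$N$ digit strings, and add $x+x$. The output splits into a middle block supported on $[0,N-1]$, whose projection has real part at most the maximum $|\Re(v^\top x)|$ by the very choice of $x$, plus two boundary blocks of length less than $p$, each contributing at most $S$. Hence $2|\Re(v^\top x)|\le |\Re(v^\top x)|+2S$, contradicting $|\Re(v^\top x)|>2S$.

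Without something of this kind your two sub-cases remain genuinely open. In the root-of-unity case, ``overdetermined and inconsistent'' is asserted rather than proved. In the other case, Kronecker's theorem does give a conjugate of $\lambda$ off the unit circle, but it may lie \emph{outside} it (as for the unimodular conjugates of a Salem number), in which case conjugation expands rather than confines; and the ``exponentially many distinct values forced by bounded support growth'' is not established, since distinct digit strings need not project to distinct values. So the proposal correctly reduces the problem but does not close it.
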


\begin{proof}
By Lemma~\ref{extendDigits}, we consider, without loss of generality, that the digit set~$\D$ generates~$\R^m$ -- i.e., that $\R^m$~is the linear hull of~$\D$.

Assume, for contradiction, that $M$~has an eigenvalue $\lambda \in \C$ on the unit circle, $|\lambda| = 1$. Let $u \in \C^m$ be an eigenvector of the matrix~$M^\top$ to the eigenvalue~$\lambda$ -- i.e., $M^\top u = \lambda u$. As $\D$~generates~$\R^m$, the vector~$u$ cannot be orthogonal to all digits, hence there exists a digit $d \in \D$ such that  $\alpha u^\top d = 1$ for some $\alpha \in \C$. Let ~$v$ be the eigenvector $v = \alpha u$, so that $v^\top d = 1$ and $v^\top M = \lambda v^\top$.

Let parallel addition be performed by a~$p$-local function. Denote
\begin{equation*}
    S = \max\bigl\{\, \bigl| \sum_{j=0}^{p-1} \lambda^j \, v^\top d_j \bigr| \,  \, d_j \in \D \bigr\} \,.
\end{equation*}
As $|\lambda| = 1$, there exist infinitely many $j \in \N$ such that $\Re(\lambda^j) > \tfrac12$.  Hence one can find $N \in \N$ and coefficients $\varepsilon_0, \varepsilon_1, \ldots, \varepsilon_{N-1} \in \{0,1\}$ such that $\Re \bigl(\, \sum_{j=0}^{N-1} \varepsilon_j \lambda^j \bigr) > 2S$. Let $x = \sum_{j=0}^{N-1} M^j x_j$ with $x_j \in \D$ such that
\begin{equation*}
    |\Re(v^\top x)| = \max\Bigl\{ \bigl| \Re\Bigl(v^\top \bigl(\, \sum_{j=0}^{N-1} M^j d_j \bigr) \Bigr) \bigr| \, : \, d_j \in \D \Bigr\} = \max\Bigl\{ \bigl| \Re\Bigl( \sum_{j=0}^{N-1} \lambda^j \, v^\top d_j \Bigr) \bigr| \, : \, d_j \in \D \Bigr\} \, .
\end{equation*}
Since $1$ and $0$ belong to $\{v^\top d \, : \, d \in \D \}$, we have
\begin{equation}\label{ono}
    |\Re(v^\top x)| \geq \Re \bigl( \, \sum_{j=0}^{N-1} \varepsilon_j \lambda^j \bigr) > 2S .
\end{equation}
The $p$-local function used to add $x + x$ produces digits $z _j \in \D$ such that
\begin{equation*}
    x + x = \sum_{j=N}^{N+p-2} M^j z_j + \sum_{j=0}^{N-1} M^j z_j + \sum_{j=-p+1}^{-1} M^j z_j \, .
\end{equation*}
After multiplication of~$2x$ by the vector $v^\top$ from the left, we have
\begin{equation}\label{onco}
    2 \, v^\top x = \lambda^N \ \sum_{j=0}^{p-2} \lambda^j \ v^\top z_{N+j} + v^\top \bigl( \, \sum_{j=0}^{N-1} M^j \,  z_j \bigr) + \lambda^{-p} \sum_{j=1}^{p-1} \lambda^j \, v^\top z_{j-p} \, .
\end{equation}
The definitions of $S$ and~$x$ guarantee that
\begin{equation*}
    \bigl| \sum_{j=0}^{p-2} \lambda^j \ v^\top z_{N+j} \bigr| \leq S, \quad \bigl| \sum_{j=1}^{p-1} \lambda^j \, v^\top z_{j-p} \bigr| \leq S, \quad \text{and} \quad \bigl| \Re\Bigl(v^\top  \bigl( \sum_{j=0}^{N-1} M^j \, z_j \bigr) \Bigr) \bigr| \leq |\Re(v^\top x)| \, .
\end{equation*}
Using these inequalities and the triangle inequality, together with~\eqref{onco} and the fact that $|\Re (y)| \leq |y|$ for every $y \in \C$, and with $|\lambda| = 1$, we get
\begin{equation*}
    2 |\Re (v^\top x)| \leq |2 v^\top x| \leq 2S + |\Re (v^\top x)| \quad - \text{ which is a contradiction to \eqref{ono}}.
\end{equation*}

\end{proof}

\section{Eventually periodic representations with expansive matrix base}

T.~V\'avra in~\cite{Va21} shows, for any algebraic complex base~$\beta$ with $|\beta| > 1$, that there exists a suitable (finite) digit set $\A \subset \Z$ such that any $x \in \mathbb{Q}(\beta)$ has an eventually periodic expansion in this base, i.e., $x = \sum_{j=-\infty}^N a_j \beta^j$, where the sequence $(a_j)_{j \leq N}$ of digits from~$\A$ is eventually periodic. Looking for an analogy to this result in the matrix systems, we first have to give a meaning of the previous sum in the case when the number base~$\beta$ is replaced with a matrix base~$M$ and (integer) number digits~$a_j$ by (integer) vector digits~$d_j$.\\

If a matrix $M \in \Z^{m \times m}$ is expansive, then $M^{-1}$~is contractive and there exists a vector norm~$\norm{\cdot}_c$ in~$\R^m$ such that $\norm{M^{-1}} < 1$, where $\norm{\cdot}$~is the matrix norm induced by the vector norm~$\norm{\cdot}_c$, see~\cite{IsaacsonKeller}. Let us recall that for these two norms the following inequalities hold:
\begin{eqnarray}
    \label{norm1}
    \norm{Ax}_c & \leq & \norm{A} \, . \, \norm{x}_c \quad {\rm for \ every} \ x \in \R^m {\rm \ and \ } A \in \R^{m \times m}; \\
    \label{norm2}
    \norm{AB} & \leq & \norm{A} \, .\, \norm{B} \quad {\rm for \ every} \ A, B \in \R^{m \times m}.
\end{eqnarray}

If $(d_j)_{-\infty}^N$ is a sequence of (vector) digits from a finite set $\D \subset \R^m$, then the vector $\sum_{j=-\infty}^N M^j\, d_j$ is well defined, as the sequence $\Bigl( \, \sum\limits_{j=-n}^N M^j \, d_j \Bigr)_{n=0}^{+\infty}$ is a Cauchy sequence. Indeed, let us denote $r := \norm{M^{-1}} < 1$ and $D := \max\{ \norm{d}_c : d \in \D\}$. The triangle inequality implies for every $n, q \in \N$ that
\begin{equation*}
    \norm{\sum_{j=-n}^N M^j \, d_j - \sum_{j=-n - q}^N M^j \, d_j }_c = \norm{\sum_{j = -n-q}^{-n-1}  M^j \, d_j }_c = \norm{\sum_{j=n+1}^{n+q} \bigl( M^{-1}\bigr)^j \, d_{-j }}_c \leq  \sum_{j=n+1}^{n+q} r^j D \leq \frac{D r^{n+1}}{1-r} \, ,
\end{equation*}
so the value $\frac{D r^{n+1}}{1-r}$ can be made arbitrarily small for all $n > n_0$, with sufficiently large $n_0 \in \N$.\\

Consequently, if $M$~is expansive, then there exists $\lim\limits_{n\to +\infty} \sum_{j=-n}^N M^j d_j$, which may be denoted as $\sum_{j=-\infty}^N M^j d_j$. In  the remaining part of this chapter, we  focus on matrix numeration systems with $M$~being an expansive matrix.\\

\medskip

Matrix numeration systems with base~$M$ being an expansive matrix have been intensively studied since the work of Vince. The main focus of the research in this area is on the systems where each lattice point has a unique representation.
Recall that  a lattice in~$\mathbb{R}^m$ is the set of all integer combinations of $m$ linearly independent vectors. A~lattice numeration system can be formalised as follows (see \cite{GerLaKovAt}):

\begin{defi}
Let $\Lambda$ be a lattice,  $M : \Lambda \rightarrow \Lambda$ be a linear operator  (also called the base or radix) and let $\D$ be a finite subset of $\Lambda$ containing~$0$ (called the digit set).
The triplet $(\Lambda, M, \D)$ is called a generalised number system (GNS) if every element $x \in \Lambda$ has a unique finite representation of the form
$$x = \sum_{j = 0}^{N} M^j d_j \, ,$$
where $N \in \N$, $d_j \in \D$ for every $j = 0, 1, \ldots, N$ and $d_N$ is a non-zero digit (if $x \neq 0$).
\end{defi}

Characterisation of triplets $(\Lambda, M, \D)$ forming GNS seems to be a difficult  task. To quote a necessary condition, we recall that two elements (lattice points) $x, y \in \Lambda$ are congruent modulo~$M$ if they belong to the same residue class, i. e., if $(x - y) \in M \Lambda$. We denote this fact by $x \equiv_\Lambda y \pmod{M}$.

\begin{theorem}{}\cite{KovacsAttila03}
If $(\Lambda, M, \D)$ is a GNS, then the following holds:
\begin{enumerate}
    \item The operator $M$ is  expansive.
    \item The digit set $\mathcal{D}$ is a complete residue system modulo M.
    \item $\det(M-I) \not = \pm 1$.
\end{enumerate}
\end{theorem}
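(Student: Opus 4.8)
The plan is to recast the defining property of a GNS as a statement about a single \emph{digit map} and then establish the three items in the order (2), (3), (1): item (2) underpins the reformulation, item (3) follows from it in a few lines, and item (1) is the delicate one. First I would prove (2). Given $x\in\Lambda$, the representation $x=\sum_{j=0}^N M^j d_j$ yields $x\equiv_\Lambda d_0 \pmod M$, so the natural map $\D\to\Lambda/M\Lambda$ is onto. For injectivity, suppose $d,e\in\D$ with $d-e=Mw$ for some $w\in\Lambda$; expanding $w=\sum_{j=0}^K M^j c_j$ in the GNS produces a representation $d=e+\sum_{j=1}^{K+1}M^j c_{j-1}$ whose digit in position $0$ is $e$. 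Comparing this with the one-digit representation $d=M^0 d$ and invoking uniqueness forces $e=d$ (and $w=0$). Hence $\D\to\Lambda/M\Lambda$ is a bijection, i.e. $\D$ is a complete residue system, and in particular $\#\D=[\Lambda:M\Lambda]=|\det M|$ is finite; thus $\det M\neq 0$ and $M$ is injective on $\Lambda$.

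With (2) in hand the reformulation is available: since $\D$ is a complete residue system, every $x\in\Lambda$ has a unique digit $\delta(x)\in\D$ with $\delta(x)\equiv_\Lambda x \pmod M$, and I set $T(x):=M^{-1}\bigl(x-\delta(x)\bigr)\in\Lambda$. Iterating gives $x=\sum_{j=0}^{n-1}M^j\delta(T^j x)+M^n T^n x$, so by uniqueness $x$ has a finite representation if and only if its $T$-orbit reaches $0$; a GNS is precisely the situation in which every orbit terminates at $0$, so $T$ has no nonzero periodic point. This makes (3) immediate: if $\det(M-I)=\pm1$, then $I-M$ is a bijection of $\Lambda$, so for a nonzero digit $d\in\D$ (one exists because $\#\D=|\det M|\ge 2$ for any GNS on a nontrivial lattice) I can solve $(I-M)y=d$ with $y\in\Lambda\setminus\{0\}$. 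Then $y=d+My\equiv_\Lambda d\pmod M$, whence $\delta(y)=d$ and $T(y)=M^{-1}(y-d)=M^{-1}(My)=y$: a nonzero fixed point of $T$, whose orbit never reaches $0$. This contradicts the GNS property, so $\det(M-I)\neq\pm1$.

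For (1) I would use a left eigenvector. Let $\lambda$ be an eigenvalue with $|\lambda|\le 1$ and let $u$ be a (complex) row vector with $uM=\lambda u$; then $u\,x=\sum_{j=0}^N\lambda^j(u\,d_j)$ for every representable $x=\sum_{j=0}^N M^j d_j$. If $|\lambda|<1$, the right-hand side is bounded in modulus by $\max_{d\in\D}|u\,d|/(1-|\lambda|)$, uniformly over all representable $x$. But $\{u\,x:x\in\Lambda\}$ is a nontrivial finitely generated subgroup of~$\C$, hence unbounded, while every $x\in\Lambda$ is representable in a GNS — a contradiction. Thus no eigenvalue lies strictly inside the unit circle.

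The hard part will be excluding eigenvalues exactly on the unit circle: for $|\lambda|=1$ the factors $|\lambda^j|=1$ let the bound above grow linearly in the length $N$, so a single functional yields no contradiction. Here the plan is to exploit the dynamical reformulation. Once $|\lambda|<1$ has been discarded, $M^{-1}$ has all eigenvalues of modulus $\le 1$, so $\norm{M^{-n}}$ grows at most polynomially and every $T$-orbit stays within a polynomially bounded region. On the real invariant subspace attached to the modulus-one eigenvalues, $T$ acts as an isometry-like map perturbed by a bounded term, and the goal is to produce, via the pigeonhole principle inside a bounded $T$-invariant set, a cycle in $\Lambda\setminus\{0\}$ — equivalently a nonzero periodic point of $T$ — contradicting termination. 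Controlling this neutral-direction dynamics (and coping with the fact that the modulus-one eigenspace need not be a rational subspace, so that $\Lambda$ need not meet it in a full-rank sublattice) is the genuine obstacle, and it is exactly the step at which I would follow the argument of~\cite{KovacsAttila03}.
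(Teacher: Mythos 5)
First, note that the paper itself offers no proof of this theorem: it is quoted from \cite{KovacsAttila03} as a known result, so there is no internal argument to compare yours against. Judged on its own merits, your treatment of items (2) and (3) is correct and is the standard one: surjectivity of $\D\to\Lambda/M\Lambda$ comes from reading off the digit $d_0$ of a representation; injectivity from comparing the one-digit representation of $d$ with the representation $d=e+\sum_{j\geq 1}M^{j}c_{j-1}$ obtained by expanding $w$ and invoking uniqueness; and the reformulation via the division map $T(x)=M^{-1}(x-\delta(x))$ turns the assumption $\det(M-I)=\pm 1$ into a nonzero fixed point $y=(I-M)^{-1}d$ of $T$, whose orbit cannot terminate at $0$. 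The left-eigenvector estimate disposing of eigenvalues with $|\lambda|<1$ is also fine.

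The genuine gap is the remaining half of item (1): excluding eigenvalues with $|\lambda|=1$. You acknowledge this yourself and propose to ``follow the argument of~\cite{KovacsAttila03}'' at exactly that point, which is not a proof; and the strategy you sketch does not close the gap. To run a pigeonhole argument you would need a \emph{bounded} $T$-invariant (or absorbing) set of lattice points inside which some orbit is forced to revisit a nonzero point. But without expansivity one cannot choose a norm with $\norm{M^{-1}}<1$, and the best available estimate, $\norm{T^n(x)}_c\leq\norm{M^{-n}}\cdot\norm{x}_c+D\sum_{j=1}^{n}\norm{M^{-j}}$, only bounds orbits polynomially in~$n$: when $M$ has a unit-modulus eigenvalue the norms $\norm{M^{-j}}$ are polynomially bounded but need not stay below~$1$, even for semisimple eigenvalues and even after optimizing the norm. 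A polynomially growing orbit in an infinite lattice need never repeat, so no cycle is produced and no contradiction with termination at~$0$ is reached. Since ``$M$ is expansive'' rather than merely ``$M$ has no eigenvalue strictly inside the unit circle'' is the substantive content of item (1), this missing step means the proposal does not establish the stated theorem.
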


Note that  the number of residue classes modulo $M$ equals $|\det(M)|$, hence the GNS has exactly $\#\D = |\det(M)|$ digits.

A sufficient condition for GNS was provided by  L. Germ\'an and A. Kov\'acs in \cite{GerLaKovAt}.

\begin{theorem}{\cite{GerLaKovAt}} Let $M: \Lambda \mapsto \Lambda$ be a non-singular linear operator. If the spectral radius of the inverse operator $M^{-1}$ is less than $\frac12$, then there exists  a digit set $\D\subset \Lambda$ such that $(\Lambda, M, \mathcal{D})$ is a GNS.
\end{theorem}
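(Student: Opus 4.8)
The plan is to run the greedy \emph{division-with-remainder} algorithm and to force it to terminate by pairing a well-adapted norm with a carefully chosen digit set. Since the spectral radius of~$M^{-1}$ is smaller than~$\tfrac12$, I first fix a vector norm $\norm{\cdot}_c$ on~$\R^m$ whose induced operator norm satisfies $r:=\norm{M^{-1}}<\tfrac12$; such a norm exists because the spectral radius is the infimum of the operator norms over all vector norms (the same reasoning quoted before~\eqref{norm1} to obtain $\norm{M^{-1}}<1$, now applied with the sharper bound). Then I choose $\D$ to consist, for each of the $|\det M|=[\Lambda:M\Lambda]$ residue classes of~$\Lambda$ modulo~$M$, of one representative of \emph{minimal} $\norm{\cdot}_c$-norm in that class (ties broken arbitrarily), selecting~$0$ for the zero class. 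By construction $\D$ is a complete residue system modulo~$M$ and $0\in\D$.

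For $x\in\Lambda$ let $d(x)\in\D$ be the representative of its class and put $R(x):=M^{-1}\bigl(x-d(x)\bigr)\in\Lambda$. The decisive elementary observation is that, since $x$ itself lies in the class $d(x)+M\Lambda$ and $d(x)$ has minimal norm in that class, we have $\norm{d(x)}_c\le\norm{x}_c$. Combining this with~\eqref{norm1} yields $\norm{R(x)}_c\le r\,\norm{x-d(x)}_c\le r\bigl(\norm{x}_c+\norm{d(x)}_c\bigr)\le 2r\,\norm{x}_c$, and here the hypothesis $r<\tfrac12$ is used in an essential way: the factor $2r$ is strictly less than~$1$, so $R$ strictly contracts the norm of every nonzero lattice point. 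Iterating gives $\norm{R^k(x)}_c\le(2r)^k\norm{x}_c\to 0$; as $\Lambda$ is discrete its nonzero vectors have norm bounded below by some $\delta>0$, so once $(2r)^k\norm{x}_c<\delta$ we must have $R^k(x)=0$.

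Taking $N$ minimal with $R^N(x)=0$ and unwinding the recursion $x=d(x)+M\,R(x)$ produces the finite expansion $x=\sum_{j=0}^{N}M^j d_j$, where $d_j\in\D$ is the representative of the class of $R^j(x)$ and $d_N\neq 0$ by minimality of~$N$. Uniqueness is then automatic from $\D$ being a complete residue system: the lowest digit must be the unique representative of $x\bmod M$, after which the same argument applies to $M^{-1}(x-d_0)\in\Lambda$; hence $(\Lambda,M,\D)$ is a GNS. The only genuinely delicate point in the whole argument is the choice of digit set. An arbitrary complete residue system would give merely $\norm{R(x)}_c\le r\,\norm{x}_c+rD$ with $D=\max_{d\in\D}\norm{d}_c$, which forces orbits into a fixed ball but leaves open the possibility of nonzero periodic orbits; passing to minimal-norm representatives is exactly what converts this into the clean strict contraction by the factor $2r<1$, thereby making the hypothesis $\rho(M^{-1})<\tfrac12$ do its work and eliminating all cycles except~$\{0\}$.
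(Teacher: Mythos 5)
The paper states this theorem only as a citation of \cite{GerLaKovAt} and supplies no proof of its own, so there is nothing internal to compare against; judged on its own, your argument is correct and is essentially the argument of the cited reference: pick an adapted norm with $r=\norm{M^{-1}}<\tfrac12$, take the dense digit set of minimal-norm representatives so that $\norm{d(x)}_c\le\norm{x}_c$, and conclude that $R(x)=M^{-1}(x-d(x))$ contracts by the factor $2r<1$, which together with discreteness of $\Lambda$ forces termination and, via the complete-residue-system property, uniqueness. The only blemish is an off-by-one at the end: if $N$ is minimal with $R^N(x)=0$, then $d_N=d(R^N(x))=d(0)=0$, and the expansion actually reads $x=\sum_{j=0}^{N-1}M^jd_j$ with nonvanishing leading digit $d_{N-1}=R^{N-1}(x)$; this is cosmetic and does not affect the validity of the proof.
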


Any lattice  $\Lambda \subset \R^m$ is just an image of $\Z^m$  under a non-singular linear map. Since a linear map does not change the GNS property, we can consider without loss of generality that $\Lambda = \Z^m$. A linear operator mapping  $\mathbb{Z}^m$ to $\mathbb{Z}^m$ corresponds to multiplication of integer vectors by a  matrix $M \in \Z^{m\times m}$. Further results are known regarding existence of GNS with special types of the digits sets:
\begin{itemize}
    \item If $\D = \{k(1, 0, \ldots, 0)^\top): k \in \Z, \ 0 \leq k < |\det(M)|\}$, we speak about {\it canonical systems}.
    \item If $\D=\{k(1, 0, \ldots, 0)^\top): k \in \Z, \ -\frac12 |\det(M)| < k \leq \frac12 |\det(M)|\}$, the digit set is called {\it symmetric}.
    \item If $\D$ contains the lattice point of the smallest norm from each residue class (modulo~$M$), then the digits set~$\D$ is said to be {\it dense}. The choice of the smallest lattice point is not necessarily unique.
    \item The {\it adjoint} digit set consists of those lattice points which belong to $|\det(M)| \, \cdot \, [-\frac12, \frac12)^m$.
\end{itemize}

Results on GNS with the special digits sets can be consulted in \cite{HuKo}.\\

\medskip

If we abandon the requirement of uniqueness for representation of vectors from~$\Z^m$, then the question on existence of a suitable digit set for a given expansive base~$M$ is much more simpler. In fact, using a sufficiently redundant digit set~$\D$ allows to represent all vectors in $\R^m$.

First, we focus on vectors from~$\R^m$ which have eventually periodic representation in the numeration system $(M, \D)$, i.e., we are interested in the set
\begin{equation}
    {\rm Per}_{\D}(M)= \Bigl\{ \sum_{j=-\infty}^N M^j d_j : N \in \N, \ d_j \in \D \text{ for each } j \leq N \text{ and } (d_j)_{-\infty}^N \text{ is eventually periodic} \Bigr\} \, .
\end{equation}

To work with eventually periodic representations, we exploit a well known fact about contractive matrices, namely that
\begin{equation*}
    (I-A)^{-1} = \sum_{j=0}^{+\infty} A^j \text{ for a contractive matrix } A \in \C^{m \times m} \, .
\end{equation*}

\begin{lem}\label{inclusion}
Let $M \in \Z^{m \times m}$ be an expansive matrix and $\D \subset \Z^m$. If $x \in {\rm Per}_{\D}(M)$, then $x \in \mathbb{Q}^m$.
\end{lem}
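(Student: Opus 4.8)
The plan is to split the eventually periodic representation of $x$ into a finite pre-period and a purely periodic tail, and to evaluate the tail as a geometric series in the contractive matrix $M^{-\ell}$, where $\ell$ is the period length. First I would fix the data of periodicity: since the digit sequence $(d_j)_{-\infty}^N$ is eventually periodic, there exist an index $s \leq N$ and a period $\ell \geq 1$ such that $d_{j-\ell} = d_j$ for every $j \leq s$. This lets me write
\[
    x \;=\; \underbrace{\sum_{j=s+1}^{N} M^j d_j}_{=:P} \;+\; \underbrace{\sum_{j=-\infty}^{s} M^j d_j}_{=:T} \, .
\]
The pre-period $P$ is immediately rational: $M$ is expansive, hence nonsingular with $\det M \neq 0$, so every integer power $M^j$ (for $j \in \Z$) has rational entries, and as each $d_j \in \Z^m$, the finite sum $P$ lies in $\mathbb{Q}^m$.

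The substance of the argument is the tail $T$. I would group the summation into consecutive blocks of length $\ell$ and use $d_{s-k\ell-i} = d_{s-i}$ (valid for all $k \geq 0$ and $0 \leq i \leq \ell-1$ by periodicity) to factor out the block contribution:
\[
    T \;=\; \sum_{k=0}^{\infty} (M^{-\ell})^{k} \, B \, , \qquad \text{where } \; B := \sum_{i=0}^{\ell-1} M^{s-i} d_{s-i} \in \mathbb{Q}^m \, .
\]
This rearrangement is legitimate because the series defining $T$ converges absolutely: expansiveness of $M$ yields an induced norm with $\norm{M^{-1}} < 1$, whence $\norm{M^{-\ell}} \leq \norm{M^{-1}}^{\ell} < 1$, so $M^{-\ell}$ is contractive (this is precisely the Cauchy estimate established just before the lemma). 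Applying the identity $(I-A)^{-1} = \sum_{k=0}^{\infty} A^k$ with $A = M^{-\ell}$, recalled above, I obtain $T = (I - M^{-\ell})^{-1} B$.

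It then remains to check rationality of this closed form. The vector $B$ is a finite $\mathbb{Q}$-combination of integer vectors, so $B \in \mathbb{Q}^m$. The matrix $M^{-\ell}$ has rational entries, and $I - M^{-\ell}$ is invertible because the spectral radius of $M^{-\ell}$ is strictly below $1$, so $1$ is not among its eigenvalues; the inverse of a nonsingular rational matrix is again rational, giving $(I - M^{-\ell})^{-1} \in \mathbb{Q}^{m \times m}$ and hence $T \in \mathbb{Q}^m$. Combining, $x = P + T \in \mathbb{Q}^m$, as claimed. I expect the only delicate point to be the bookkeeping of the block regrouping together with the justification that $I - M^{-\ell}$ is invertible; both are controlled by the contractivity $\norm{M^{-\ell}} < 1$, so no genuine obstacle arises.
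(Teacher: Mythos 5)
Your proof is correct and follows essentially the same route as the paper: split off the finite pre-period, evaluate the purely periodic tail in closed form as $(I-M^{-\ell})^{-1}$ applied to a finite rational vector, and conclude from rationality of inverses and powers of the integer matrix $M$. The only cosmetic difference is that you sum the geometric series $\sum_k (M^{-\ell})^k B$ directly, whereas the paper derives the same closed form from the fixed-point identity $x = M^{-1}d_{-1}+\cdots+M^{-p}d_{-p}+M^{-p}x$.
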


\begin{proof}
Firstly, we assume that an~$(M, \D)$-representation of~$x$ has the form $0 \bullet (d_{-1} d_{-2} \cdots d_{-p})^\omega$. It means that $x = M^{-1} d_{-1} + M^{-2} d_{-2} + \dots + M^{-p} d_{-p} + M^{-p} x$, which implies that
\begin{equation}\label{purely-periodic}
    x = (I_m - M^{-p})^{-1} (M^{-1} d_{-1} + M^{-2} d_{-2} + \dots + M^{-p} d_{-p}) \, .
\end{equation}
Since all elements of~$M$ are integer, it follows that all the matrix powers $M^{-j}$ with $j \in \N$ and $(I_m - M^{-p})^{-1}$ belong to~$\mathbb{Q}^{m \times m}$. Therefore, $x \in \mathbb{Q}^{m}$.

Now, let us assume that a representation of~$x$ is eventually periodic. Then there exists $k \in \Z$ such that $M^k x = z + y$, where $z$~can be written as $z = \sum_{j=0}^{N} M^j d_j$ and $y$ has the purely periodic form $y = 0 \bullet (d_{-1} d_{-2} \cdots d_{-p})^\omega$. Obviously, $z \in \Z^m$ and, by the previous argumentation, $y \in \mathbb{Q}^{m}$. Hence $x = M^{-k} (z+y)$ belongs to $\mathbb{Q}^{m}$ as well.

\end{proof}

To study the implication opposite to the previous Lemma \ref{inclusion}, we use the concept of parallel addition.

\begin{lem}\label{PerD-M_closed-under-addition}
If the digit set~$\D$ allows parallel addition on ${\rm Fin}_{\D}(M)$, then it allows parallel addition also on ${\rm Per}_{\D}(M)$, and the result of addition is again eventually periodic. It means that, for such a digit set~$\D$, the set ${\rm Per}_{\D}(M)$ is closed under addition.
\end{lem}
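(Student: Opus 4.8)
The plan is to reuse the very same $p$-local conversion $\varphi:(\D+\D)^{\Z}\to\D^{\Z}$ that witnesses parallel addition on ${\rm Fin}_{\D}(M)$, and to show it behaves correctly when fed the bi-infinite, no longer finitely supported, digit strings coming from ${\rm Per}_{\D}(M)$. First I would fix $x,y\in{\rm Per}_{\D}(M)$ with eventually periodic representations $(a_j)_{j\le N}$ and $(b_j)_{j\le N}$, and form the interim string $w=(w_j)_{j\in\Z}$ with $w_j=a_j+b_j\in\D+\D$ (and $w_j=0$ for $j>N$). Since the sum of two eventually periodic sequences is again eventually periodic, there exist a threshold $J_0$ and a period $\ell\ge1$ with $w_j=w_{j-\ell}$ for all $j\le J_0$. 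Setting $z=\varphi(w)$, the digits $z_j\in\D$ are the candidate output, and by $p$-locality I may write $z_j=\Phi(w_{j+t},\dots,w_{j-s})$ for fixed $s,t\ge0$ with $p=s+t+1$.

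The second step is to verify that $z$ is again an eventually periodic $(M,\D)$-representation, computed by the same local rule. Whenever the whole window lies in the periodic region — that is, for $j\le J_0-t$ — the relation $w_i=w_{i-\ell}$ applies to every entry, so the window at $j$ coincides with the window at $j-\ell$ and hence $z_j=z_{j-\ell}$; thus $z$ is eventually periodic as $j\to-\infty$, with the same period $\ell$, and its preperiod is obtained from those of $x,y$ by the fixed rule. On the other side, $w_j=0$ for $j>N$, so for $j>N+s$ the window is the all-zero block; since the zero string has finitely many nonzero digits, condition~(1) of Definition~\ref{digitsetconv} forces its image to be finitely supported, and as $\Phi(0,\dots,0)$ is constant in $j$ it must vanish, whence $z_j=0$ for all large $j$. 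Therefore $z$ has finitely many nonzero digits on the left and is eventually periodic on the right, i.e.\ it encodes an element of ${\rm Per}_{\D}(M)$.

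The hard part will be proving that $\varphi$ preserves value on these infinitely supported strings, namely $\sum_j M^j z_j=\sum_j M^j w_j=x+y$, since the hypothesis only guarantees value preservation (condition~(2) of Definition~\ref{digitsetconv}) for strings with finitely many nonzero digits. I would handle this by truncation and a limit. For $n\in\Z$ let $w^{(n)}$ agree with $w$ for $j\ge n$ and vanish for $j<n$; then $w^{(n)}$ is finitely supported, so $\varphi(w^{(n)})$ is finitely supported and $\sum_j M^j\varphi(w^{(n)})_j=\sum_{j\ge n}M^jw_j$. By locality $\varphi(w^{(n)})_j=z_j$ for all $j\ge n+s$, so
\[
  \sum_{j\ge n}M^jw_j=\sum_{j\ge n+s}M^jz_j+\sum_{j<n+s}M^j\varphi(w^{(n)})_j .
\]

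Finally I would let $n\to-\infty$. The left side tends to $x+y$, and $\sum_{j\ge n+s}M^jz_j$ tends to $\sum_jM^jz_j$; both series converge because $M$ is expansive, exactly as in the Cauchy estimate preceding Lemma~\ref{inclusion}. The remaining tail is controlled by $\norm{M^j}\le r^{|j|}$ for $j<0$ (with $r=\norm{M^{-1}}<1$) together with boundedness of the digits $D:=\max\{\norm{d}_c:d\in\D\}$, giving, for $n$ small enough, a norm at most $\tfrac{D}{1-r}\,r^{-n-s+1}\to0$. Hence $\sum_jM^jz_j=x+y$, which shows that the same local function adds eventually periodic representations to an eventually periodic representation of the correct value; in particular ${\rm Per}_{\D}(M)$ is closed under parallel addition. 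Subtraction follows as in the proof of Theorem~\ref{veta}, by writing $-y$ as a sum of two elements of ${\rm Per}_{\D}(M)$.
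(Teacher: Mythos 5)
Your proof is correct and follows the same core strategy as the paper's: feed the eventually periodic interim string $(a_j+b_j)_j$ to the same $p$-local conversion and observe that, once the whole window lies in the periodic part, the output digits repeat with the same period. Where you go beyond the paper is your third step: the paper's proof establishes only the periodicity of the output and tacitly assumes that the local function still preserves the represented value when applied to a string with infinitely many nonzero digits, whereas condition (2) of Definition~\ref{digitsetconv} is stated only for finitely supported inputs. Your truncation-and-limit argument --- cutting $w$ off at position $n$, using locality to match the truncated output with $z_j$ for $j\ge n+s$, and letting $n\to-\infty$ with the geometric tail bound coming from $\norm{M^{-1}}=r<1$ --- is exactly the estimate needed to justify this extension to ${\rm Per}_{\D}(M)$, and it is a genuine improvement in rigour over the paper's version. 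Your side remark that $\Phi(0,\dots,0)=0$ (forced by condition (1) applied to the zero string) correctly disposes of the most-significant end. I see no gaps.
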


\begin{proof}
Let us explain this statement, assuming the parallel addition on ${\rm Fin}_{\D}(M)$ is a $p$-local function, with $p = r + 1 + t$, memory~$r$ and anticipation~$t$.

Let $\omega_x$, $\omega_y$ be the lengths of the periods of $(M, \D)$-representations of the summands $x$, $y$, and denote $\omega_{xy} := \omega_x \omega_y$. Clearly, both $x$ and $y$ have $(M, \D)$-representations with the same period $\omega_{xy}$, and the same holds for the $(M, \D + \D)$-representation of their (eventually periodic) sum $w = x + y$ calculated by pure summation of digits on each position separately.

Without loss of generality, we can assume that $\omega_{xy} \geq p$; otherwise just use a sufficiently big multiple of the period length $\omega_{xy}$. Then it is clear that, by applying the $p$-local function $\varphi : (\D + \D)^p \mapsto \D$ onto the $(M, \D + \D)$-representation of the interim sum~$w$ with period $(w_1, \ldots, w_{\omega_{xy}})^{\omega}$, the resulting $(M, \D)$-representation of the final sum~$z$ must be eventually periodic with period $\omega_{xy}$ as well:
\begin{equation*}
    z_j := \varphi (w_{j+t}, \ldots, w_j, \ldots, w_{j-r}) = \varphi (w_{j+t+\omega_{xy}}, \ldots, w_{j+\omega_{xy}}, \ldots, w_{j-r+\omega_{xy}}) =: z_{j+\omega_{xy}} \in \D \, ,
\end{equation*}
considering any $p$-tuple $(w_{j+t}, \ldots, w_j, \ldots, w_{j-r})$ from the eventually periodic part of~$w$.
\end{proof}

\begin{theorem}\label{periodic1}
Let $M \in \Z^{m \times m}$ be an expansive matrix. Then there exists a finite digit set $\D \subset \Z^m$ such that ${\rm Per}_{\D}(M)= \mathbb{Q}^m$.
\end{theorem}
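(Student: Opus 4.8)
The plan is to prove the two inclusions ${\rm Per}_\D(M)\subseteq\mathbb{Q}^m$ and $\mathbb{Q}^m\subseteq{\rm Per}_\D(M)$ separately. The first is exactly \cref{inclusion}, so all the work goes into the second, for which I must \emph{construct} a suitable finite digit set and exhibit an eventually periodic representation of an arbitrary $x\in\mathbb{Q}^m$. For the digit set I would start from a set $\D_0\subset\Z^m$ provided by \cref{veta} (an expansive matrix has no eigenvalue on the unit circle, so \cref{veta} applies), and then enlarge it, using \cref{extendDigits}, to $\D:=\D_0\cup(\Z^m\cap B)$, where $B$ is the closed ball $\{y\in\R^m:\norm{y}_c\le R\}$ in the norm $\norm{\cdot}_c$ with $\norm{M^{-1}}<1$ from the preamble of this section; the radius $R$ is fixed below. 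Enlarging keeps $0\in\D$ and $\Z^m\subset{\rm Fin}_\D(M)$, so the standing assumptions \eqref{Zm-in-Fin}--\eqref{0-in-D} remain valid.

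The heart of the argument is a greedy remainder dynamics that produces a representation by negative powers only. Let $\gamma_0$ be the covering radius of $\Z^m$ for $\norm{\cdot}_c$, i.e. the least number such that every point of $\R^m$ lies within $\norm{\cdot}_c$-distance $\gamma_0$ of $\Z^m$; it is finite since $\Z^m$ is cocompact. Set $R:=(\norm{M}+1)\gamma_0$. For any $y$ with $\norm{y}_c\le\gamma_0$, a fixed choice of lattice point $d\in\Z^m$ nearest to $My$ satisfies $\norm{My-d}_c\le\gamma_0$ and $\norm{d}_c\le\norm{M}\gamma_0+\gamma_0=R$, so $d\in\Z^m\cap B\subseteq\D$. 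Hence, starting from any $x'$ with $\norm{x'}_c\le\gamma_0$ and iterating $x_0:=x'$, $x_n:=Mx_{n-1}-d_{-n}$ with $d_{-n}\in\D$ chosen this way, the remainders never leave $\{\norm{\cdot}_c\le\gamma_0\}$; from $x'=\sum_{j=1}^{n}M^{-j}d_{-j}+M^{-n}x_n$ and $\norm{M^{-n}x_n}_c\le\norm{M^{-1}}^n\gamma_0\to0$ I obtain $x'=\sum_{j=1}^{\infty}M^{-j}d_{-j}$.

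The finiteness that forces periodicity comes from the denominators. If $x'\in\tfrac1Q\Z^m$ for some $Q\in\N$, then, $M$ and every $d_{-n}$ being integral, all remainders $x_n$ lie in $\tfrac1Q\Z^m$; being also confined to the fixed ball, they take only finitely many values, so $x_a=x_b$ for some $a<b$. From that point on the greedy rule, a function of the current remainder alone, reproduces the same digits, so $(d_{-n})$ can be taken eventually periodic and $x'\in{\rm Per}_\D(M)$. For a general $x\in\mathbb{Q}^m$ I first remove the transient: since $\norm{M^{-k}x}_c\le\norm{M^{-1}}^k\norm{x}_c\to0$, there is $k\in\N$ with $x':=M^{-k}x\in B$, and $x'\in\tfrac1{Q}\Z^m$ with $Q=q\Delta^k$ whenever $x\in\tfrac1q\Z^m$. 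Applying the construction to $x'$ and multiplying its eventually periodic representation by $M^k$ (a mere index shift preserving eventual periodicity) yields an eventually periodic $(M,\D)$-representation of $x$; thus $\mathbb{Q}^m\subseteq{\rm Per}_\D(M)$, and with \cref{inclusion} the two sets coincide.

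I expect the main obstacle to be the bookkeeping that confines all greedy digits to one \emph{fixed} finite set $\D$: one must work in the norm $\norm{\cdot}_c$ (so that $M^{-1}$ is a genuine contraction), pin down the invariant ball of radius $\gamma_0$, and check that $R=(\norm{M}+1)\gamma_0$ is large enough that every nearest-lattice-point digit along the orbit lies in $\Z^m\cap B$. The companion point is that the remainders must remain in a \emph{single} scaled lattice $\tfrac1Q\Z^m$; this is what converts boundedness into a finite state space and hence into periodicity, and it is where integrality of $M$ and of the digits is essential. The closure of ${\rm Per}_\D(M)$ under addition (\cref{PerD-M_closed-under-addition}) is not strictly needed on this route, but it offers an alternative to the transient step, by splitting $x$ into an integer part in ${\rm Fin}_\D(M)$ and a bounded fractional part.
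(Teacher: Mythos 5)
Your proof is correct, but it takes a genuinely different route from the paper. The paper also reduces everything to the inclusion $\mathbb{Q}^m\subseteq{\rm Per}_{\D}(M)$ via \cref{inclusion}, but it then keeps the parallel-addition digit set of \cref{veta} unchanged and argues arithmetically: by pigeonhole on the residues of $I,M,M^2,\dots$ modulo $q$ it finds $M^{k+\ell}-M^k=qC$ with $C\in\Z^{m\times m}$, whence $\tfrac1q I=M^{-\ell-k}\bigl(\sum_{j\ge 0}M^{-\ell j}\bigr)C$; expanding the first column of $C$ finitely in $(M,\D)$ exhibits $\tfrac1q e_1$ as a finite sum of eventually periodic elements, and the conclusion then rests essentially on \cref{PerD-M_closed-under-addition}, i.e.\ on closure of ${\rm Per}_{\D}(M)$ under (parallel) addition. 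You instead enlarge the digit set to contain all lattice points in a ball of radius $(\norm{M}+1)\gamma_0$ and run a greedy remainder map $x_n=Mx_{n-1}-d_{-n}$ whose orbit stays in the $\gamma_0$-ball and in a fixed scaled lattice $\tfrac1Q\Z^m$, so pigeonhole on this finite state space plus determinism of the digit choice forces eventual periodicity directly. Your version is more elementary and effectively algorithmic, and it uses \cref{veta} only to preserve the standing assumption $\Z^m\subset{\rm Fin}_{\D}(M)$ rather than as the engine of the proof; the paper's version buys a smaller digit set (the one already tuned for parallel addition, with no ball enlargement) and reuses machinery already established. One small imprecision to fix: in your transient step you should choose $k$ so that $\norm{M^{-k}x}_c\le\gamma_0$ (the invariant ball of the greedy map), not merely $M^{-k}x\in B$, since the first greedy digit of a point of norm up to $R$ need not lie in $\Z^m\cap B$; this follows from the same limit $\norm{M^{-k}x}_c\to 0$, so nothing is lost.
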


\begin{proof}
By Theorem~\ref{veta} and Lemma~\ref{extendDigits}, there exists a finite digit set~$\D$ such that addition in $(M,\D)$ can be performed in parallel. Due to Lemma~\ref{inclusion}, it remains to prove only the inclusion $\mathbb{Q}^m \subset {\rm Per}_{\D}(M)$. We denote by~$e_s$ the vector from $\Z^m$ whose $s^{th}$~coordinate equals~$1$ and all other coordinates are zero. In the first step, we show that for each $q \in \N$ and each $s = 1, 2, \ldots, m$, the vector $\frac{1}{q} e_s$ has an eventually periodic $(M, \D)$-representation. For this purpose, we define a congruence relation on the matrices from $\Z^{m \times m}$.

We say that $A \in \Z^{m \times m}$ is congruent modulo~$q$ to $B \in \Z^{m \times m}$, if $(A-B) \in  q \, \Z^{m \times m}$. As the number of congruence classes is finite (for a fixed $q \in \N$), we find in the list $I, M, M^2, M^3, \ldots$ two matrices from the same congruence class. In other words, there exist $k, \ell \in \N, \ell > 0$ such that $M^{k+\ell} - M^k = q C$ for some $C \in \Z^{m \times m}$, or, equivalently
\begin{equation}\label{perioda}
    \tfrac{1}{q} \, I =  M^{-\ell - k} \, (I -M^{-\ell})^{-1} \, C = M^{-\ell - k} \Bigl( \, \sum_{j=0}^{+\infty} M^{-\ell j} \Bigr) \, C \, .
\end{equation}
Let $C_1$ denote the first column of the matrix~$C$. As $\Z^m \subset {\rm Fin}_{\D}(M)$, we can write $C_1 = \sum_{j=0}^n M^j f_j$. It is due to Lemma~\ref{dominant} and Remark~\ref{FractionalPart} that the bottom index of the sum equals zero, as $M$~is expansive, so all of its eigenvalues are $>1$~in modulus. The first column of the matrix equality~\eqref{perioda} then equals
\begin{equation*}
    \tfrac1{q} \, e_1 = M^{-\ell - k}\Bigl( \, \sum_{j=0}^{+\infty} M^{-\ell j} f_0 + \, \sum_{j=0}^{+\infty} M^{-\ell j+1} f_1 + \cdots + \, \sum_{j=0}^{+\infty} M^{-\ell j+n} f_n \Bigr) \, .
\end{equation*}

Thus, we have expressed $\tfrac1{q} \, e_1$  as a sum of $n+1$~vectors with eventually periodic $(M, \D)$-representation. Since ${\rm Per}_{\D}(M)$ is closed under addition, due to Remark~\ref{PerD-M_closed-under-addition}, the vector~$\tfrac1{q} \, e_1$ belongs to ${\rm Per}_{\D}(M)$ as well. Analogously, the same holds for~$\tfrac1{q} \, e_2, \ldots, \tfrac1{q}\, e_m$ and for $-\tfrac1{q} \, e_1, \ldots, -\tfrac1{q}\, e_m$, therefore they are also elements of~${\rm Per}_{\D}(M)$.

In the second step, we consider an arbitrary vector $x \in \mathbb{Q}^m$. We find $q \in \N$ and $y = (y_1, \ldots, y_m)^\top \in \Z^m$ such that $x = y_1 \bigl(\frac1{q} \, e_1\bigr) + \cdots + y_m \bigl(\frac1{q} \, e_m\bigr)$. It means that $x$~is a sum of $(|y_1 |+ \cdots + |y_m|)$ vectors from the set ${\rm Per}_{\D}(M)$, which is closed under addition. Hence $x \in {\rm Per}_{\D}(M)$.
\end{proof}

\begin{corollary}\label{total}
Let $M \in \Z^{m \times m}$ be an expansive matrix. Then there exists a finite digit set $\D \subset \Z$ such that every $x \in \R^m$ has an $(M,\D)$-representation.
\end{corollary}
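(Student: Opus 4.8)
The plan is to select a sufficiently large digit set $\D$ and show directly that every $x \in \R^m$ lies in $\mathrm{Rep}_\D(M) := \{\sum_{j=-\infty}^N M^j d_j : N \in \N,\ d_j \in \D\}$. Write $X := \{\sum_{j=-\infty}^0 M^j d_j : d_j \in \D\}$ for the set of vectors representable using only non-positive powers of~$M$. The elementary but crucial observation is that $x$ has an $(M,\D)$-representation if and only if $M^{-N} x \in X$ for some $N \in \N$: multiplying a representation with top index~$N$ by $M^{-N}$ shifts it into~$X$, and conversely. Since $M$ is expansive, the adapted norm $\norm{\cdot}_c$ introduced at the beginning of this section satisfies $r := \norm{M^{-1}} < 1$, so $\norm{M^{-N} x}_c \le r^N \norm{x}_c \to 0$. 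Hence it suffices to prove that $X$ contains a ball $\{v : \norm{v}_c \le \rho\}$ about the origin: then $M^{-N} x$ eventually enters this ball, and the reduction yields a representation of~$x$.

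To produce such a ball I would run a \emph{greedy} expansion. Let $\delta := \sup_{y\in\R^m} \min_{n \in \Z^m} \norm{y - n}_c$ be the (finite) covering radius of the lattice $\Z^m$ in the norm $\norm{\cdot}_c$, and set $\rho := \norm{M}\,\delta$. Take $\D$ to contain every lattice point $n \in \Z^m$ with $\norm{n}_c \le \rho + \delta$; this is a finite set containing~$0$. The key covering step reads: for any $v$ with $\norm{v}_c \le \rho$, choosing $d_0 \in \D$ to be a nearest lattice point to~$v$ gives $\norm{v - d_0}_c \le \delta$, whence $\norm{M(v - d_0)}_c \le \norm{M}\,\delta = \rho$ by \eqref{norm1}. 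Thus $v_1 := M(v - d_0)$ again lies in the same ball, and $v = d_0 + M^{-1} v_1$. Iterating produces digits $d_0, d_{-1}, d_{-2}, \ldots \in \D$ with $v = \sum_{i=0}^{n-1} M^{-i} d_{-i} + M^{-n} v_n$ and $\norm{M^{-n} v_n}_c \le r^n \rho \to 0$; convergence of the resulting series $\sum_{j=-\infty}^0 M^j d_j$ is exactly the Cauchy estimate already established in this section for expansive~$M$. Therefore $v \in X$, giving the inclusion $\{v : \norm{v}_c \le \rho\} \subseteq X$.

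To connect with the preceding development (and to justify the word ``consequently''), I would take the digit set~$\D$ furnished by Theorem~\ref{periodic1} and enlarge it, if necessary, so that it contains all lattice points of $\norm{\cdot}_c$-norm at most $\rho + \delta$; by Lemma~\ref{extendDigits} this enlargement preserves parallel addition, hence the equality $\mathrm{Per}_\D(M) = \mathbb{Q}^m$, while the greedy argument above now applies verbatim. Combining the two steps gives $\R^m \subseteq \mathrm{Rep}_\D(M)$, which is the assertion.

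The main obstacle is the balancing in the covering step: one must calibrate the radius~$\rho$ against the contraction factor $r = \norm{M^{-1}}$ and the covering radius~$\delta$ so that a single greedy digit both rescales the remainder by $M^{-1}$ and keeps the new remainder $M(v - d_0)$ inside the very same ball, so that the procedure iterates indefinitely. The choice $\rho = \norm{M}\,\delta$ achieves precisely this, and the only place where expansiveness of~$M$ is used is the final convergence $\norm{M^{-n} v_n}_c \le r^n \rho \to 0$, which is already available.
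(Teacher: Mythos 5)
Your proof is correct, but it takes a genuinely different route from the paper's. The paper derives the corollary from Theorem~\ref{periodic1}: it approximates $x \in \R^m$ by rationals $x^{(n)}$ with eventually periodic representations, shows that the heights of the integer parts $y^{(n)}$ are uniformly bounded (via the boundedness of the fractional parts), and then extracts a representation of~$x$ by a diagonal, K\"onig-type argument, choosing position by position a digit that occurs infinitely often among the representations of the $x^{(n)}$. You instead run a greedy expansion, calibrating the ball radius $\rho = \norm{M}\,\delta$ against the covering radius~$\delta$ of $\Z^m$ so that the remainder map $v \mapsto M(v - d_0)$ preserves the ball $\{v : \norm{v}_c \le \rho\}$, with expansiveness entering only through the convergence $\norm{M^{-n} v_n}_c \le r^n \rho \to 0$; the reduction of general $x$ to this ball via $M^{-N}x \to 0$ is sound. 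Your argument is more elementary and self-contained: it needs neither Theorem~\ref{periodic1} nor parallel addition, and it yields an explicit algorithm producing the digits. What the paper's route buys is that the digit set of Theorem~\ref{periodic1} works \emph{unchanged}, so one and the same $\D$ simultaneously supports parallel addition, eventually periodic representations of $\mathbb{Q}^m$, and representations of all of $\R^m$; your construction may require enlarging $\D$ to contain all lattice points in a $\norm{\cdot}_c$-ball of radius $\rho + \delta$, though, as you note, Lemma~\ref{extendDigits} together with the monotonicity of ${\rm Per}_{\D}(M)$ in $\D$ and Lemma~\ref{inclusion} shows that this enlargement loses nothing. Since the corollary only asserts existence of some finite digit set, both proofs establish the statement as written.
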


\begin{proof}
Let $\norm{\cdot}_c$ be the vector norm of~$\R^m$ mentioned in \eqref{norm1} -- \eqref{norm2}, for which the induced matrix norm of the contractive matrix~$M^{-1}$ is $\norm{M^{-1}} = r <1$.

Since $M$ is expansive, any vector $y \in \Z^m$ has an $(M,\D)$-representation in the form $y = \sum_{j=0}^N M^j d_j$ for some $N \in \N$ and $d_j \in \D$. In general, the representation is not unique. We denote by ${\rm height}(y)$ the minimal $N \in \N$ among  all the $(M, \D)$-representations of~$y$.

Let $x \in \R^m $ and $\bigl(x^{(n)}\bigr)_{n \in \N}$ be a sequence of vectors from~$\mathbb{Q}^m$ such that $\lim\limits_{n\to \infty} x_n = x$. By the previous theorem, we have $x^{(n)} = \sum_{j=-\infty}^{N_n} M^j d^{(n)}_j$. Denote the integer and fractional parts of~$x^{(n)}$ by $y^{(n)} := \sum_{j=0}^{N_n} M^j d^{(n)}_j$ and $z^{(n)} := \sum_{j=-\infty}^{-1} M^j d^{(n)}_j$, respectively. Obviously, $y^{(n)} \in \Z^m$. Assume that $(M, \D)$-representations of the integer parts~$y^{(n)}$ satisfy $N_{(n)} = {\rm height}(y^{(n)})$ for every $n \in \N$. The size of the fractional parts~$z^{(n)}$ is bounded. Indeed, $\norm{z^{(n)}}_c = \norm{\sum_{j=-\infty}^{-1} M^j d^{(n)}_j}_c \leq \sum_{j=-\infty}^{-1} r^{-j} D = \frac{rD}{1-r}$. Since any convergent sequence is bounded, the integer parts $y^{(n)} \in \Z^m$ are bounded as well, as $\norm{y^{(n)}}_c \leq \norm{x^{(n)} - z^{(n)}}_c \leq \norm{z^{(n)}}_c + \norm{x^{(n)}}_c$. It means that $y^{(n)}$~can take only finitely many values in~$\Z^m$, and thus their heights~$N_{(n)}$ are bounded as well, say by~$N$.

Hence we can rewrite the representation of~$x^{(n)}$ for each $n \in \N$ into the form $x^{(n)} = \sum_{j=-\infty}^{N} M^j d^{(n)}_j$ with the uniform upper index~$N$ of the sums (if necessary, we add leading zero coefficients to sums). Now we are ready to find an $(M, \D)$-representation of $x = \lim\limits_{n\to \infty}x^{(n)}$.

We construct the sequence $d_N d_{N-1} \cdots d_0 \bullet d_{-1} d_{-2}\cdots$ of digits from~$\D$. A~digit which appears infinitely times among $x_N^{(n)}$ will be chosen as~$d_N$. Then, we chose $d_{N-1}$ as such a digit that the pair of digits $d_{N} d_{N-1}$ appears infinitely many times among $x_N^{(n)} x^{(n)}_{N-1}$. Similarly, $d_{N-2}$ is chosen as such a digit that the triplet  $d_{N} d_{N-1} d_{N-2}$ appears infinitely many times among $x_N^{(n)} x^{(n)}_{N-1} x^{(n)}_{N-2}$, and so on. By this construction, for every $l \in \N$ there exists $k_l \in \Z$ such that the strings $d_N d_{N-1} \cdots d_0 \bullet d_{-1} d_{-2} \cdots$ and $x^{(k_l)}_{N} x^{(k_l)}_{N-1} \cdots x^{(k_l)}_{0} \bullet x^{(k_l)}_{-1} x^{(k_l)}_{-2}\cdots$ have a common prefix of length at least~$l$. Therefore,
\begin{equation*}
    \norm{\sum_{j=-\infty}^N M^j d_j - \sum_{j=-\infty}^N M^j x^{(k_l)}_j} = \norm{\sum_{j=-\infty}^{N-l} M^j \bigl(d_j - x^{(k_l)}_j \bigr)} \leq 2D \sum_{j={l-N}}^{+\infty} \, r^{j} = \frac{2D}{1-r} \, r^{l-N} \, .
\end{equation*}
Since the right side of this inequality tends to~$0$ with $l \rightarrow +\infty$, we obtain
\begin{equation*}
    \sum_{j=-\infty}^N M^j d_j = \lim_{l \to +\infty} x^{(k_l)} = \lim_{n \to +\infty} x^{(n)} = x \, ,
 \end{equation*}
and thus $\sum_{j=-\infty}^N M^j d_j$ is an $(M,\D)$-representation of~$x$.

\end{proof}

\section{Open questions}
We have focused only on two  questions connected with $(M,\D)$-representation of vectors: computability of addition in parallel and eventually periodic representations. Many open problems still remain unresolved. Let us list some of them:
\begin{enumerate}
    \item What is the minimal size of a digit set $\D \subset \Z^m$ with the property $\Z^m \subset {\rm Fin}_{\D}(M)$? This question was already tackled in~\cite{CaHaVa22} for very special matrices, namely for Jordan blocks~$J_m(1)$ corresponding to the eigenvalue~$1$.
    \item Is it possible for a given non-singular  matrix $M \in \Z^{m\times m}$ to find a (finite) digit set $\D \subset \Z^m$ such that every element in ${\rm Fin}_{\D}(M)$ has a unique representation in this numeration system? If $M$ is expansive, then size of the suitable digit set (if it exists) is $|\det(M)|$, see \cite{KovacsAttila03}. What is an analogy of this result for a  non-expansive matrix?
    \item Does there exist any (finite) digit set such that multiplication of vectors from~$\R^m$ by a~scalar $x \in \R$ can be performed by an on-line algorithm? Note that K.~Trivedi and M.~Ercegovac in~\cite{TrEr} designed on-line algorithms for multiplication and division of two numbers represented in a numeration system $(\beta, \A)$ with $\beta \in \N$. Their algorithms were later generalised to numeration systems $(\beta, \A)$ with base~$\beta$ being a (real or complex) Pisot number~\cite{FRPaPeSv}.
\end{enumerate}

\section*{Acknowledgements}

Edita Pelantov\'{a} acknowledges financial support by The Ministry of Education, Youth and Sports of the Czech Republic, project no. {CZ.02.1.01/0.0/0.0/16\_019/0000778}.




\end{document}